\def\ps@headings{%
\def\@oddhead{\mbox{}\scriptsize\rightmark \hfil \thepage}%
\def\@evenhead{\scriptsize\thepage \hfil \leftmark\mbox{}}%
\def\@oddfoot{}%
\def\@evenfoot{}}
\newtheorem{definition}{Definition}
\newtheorem{lemma}{Lemma}
\newtheorem{theorem}{Theorem}
\newtheorem{corollary}{Corollary}
\newcommand{\capp}{\operatorname{cap}}
\newcommand{\lex}{\operatorname{lex}}
\def\@copyrightspace{\relax}
\begin{document}

\title{Loop-Free Backpressure Routing \\ Using Link-Reversal Algorithms}
\numberofauthors{4}
\author{
\alignauthor
Anurag Rai\\
       \affaddr{LIDS, MIT}\\
       \affaddr{Cambridge, MA, USA}\\
       \email{rai@mit.edu}
\alignauthor
Chih-ping Li\\
	\affaddr{Qualcomm Research}\\
	\affaddr{San Diego, CA, USA}\\
	\email{cpli@qti.qualcomm.com}
\and
\alignauthor
Georgios Paschos\\
	\affaddr{Mathematical and Algorithmic Sciences Lab}\\
	\affaddr {France Research Center}\\
	\affaddr{Huawei Technologies Co., Ltd.}\\
	\email{georgios.paschos@huawei.com}
\alignauthor
Eytan Modiano\\
       \affaddr{LIDS,MIT}\\
       \affaddr{Cambridge, MA, USA}\\
       \email{modiano@mit.edu}
}

\maketitle
\begin{abstract}
The backpressure routing policy is known to be a throughput optimal policy that supports any feasible traffic demand in data networks, but may have poor delay performance when packets traverse loops in the network. In this paper, we study loop-free backpressure routing policies that forward packets along directed acyclic graphs (DAGs) to avoid the looping problem. These policies use link reversal algorithms to improve the DAGs in order to support any achievable traffic demand.

For a network with a single commodity, we show that a DAG that supports a given traffic demand can be found after a finite number of iterations of the link-reversal process. We use this to develop a joint link-reversal and backpressure routing policy, called the loop free backpressure (LFBP) algorithm. This algorithm forwards packets on the DAG, while the DAG is dynamically updated based on the growth of the queue backlogs. We show by simulations that such a DAG-based policy improves the delay over the classical backpressure routing policy. We also propose a multicommodity version of the LFBP algorithm, and via simulation we show that its delay performance is better than that of backpressure.

\end{abstract}
\section{Introduction}
Throughput and delay are the two major  metrics used to evaluate the performance of communication networks.
For networks that exhibit high variability, such as mobile ad~hoc networks, the dynamic backpressure routing policy  \cite{tassiulas} 
is a highly desirable solution,  known to maximize  throughput in a wide range of settings. 
However,  the delay performance of backpressure is  poor \cite{srikant}.
 The high delay is attributed to  a property of  backpressure that allows the packets to loop within the network instead of moving towards the destination. 
 In this paper we  improve the delay performance of backpressure routing by constraining the data routing along loop free paths.

To eliminate  loops in the network, we assign directions to the links such that the network becomes a directed acyclic graph (DAG). 
Initially, we generate an  arbitrary DAG and use backpressure routing over it.
If the initial DAG has max-flow smaller than  the traffic demand, parts of the network become overloaded. 
By reversing the direction of the links that point from non-overloaded to overloaded nodes a new DAG with a lower overload is obtained. 
Iterating over this process, our distributed algorithm gradually converges to a DAG that supports any  traffic demand  feasible in the network. Hence the loop-free property is achieved without the loss of throughput.

Prior work  identifies  looping as a main cause for high delays in backpressure routing and proposes delay-aware backpressure techniques. 
Backpressure enhanced with hop count bias is first proposed  in \cite{neely_enhanced} to drive packets through  paths with smallest hop counts when the load is low.  
An alternative backpressure modification that utilizes shortest path information is proposed in \cite{ying}. 
A different line of works proposes to learn the network topology using backpressure and then use this information to enhance routing decisions. In \cite{atilla} backpressure is constrained to a subgraph which is discovered by running unconstrained backpressure for a time period and computing the average number of  packets routed over each link. Learning is  effectively used in scheduling \cite{huang} and utility optimization \cite{longbo} for wireless networks.
In our work we aim to eliminate loops by restricting backpressure to a DAG, while we dynamically improve the DAG by reversing links.

The link-reversal algorithms were introduced in \cite{gafni_bertsekas} as a means to maintain connectivity in networks with volatile links.  These distributed algorithms react to any topological changes to obtain a DAG such that each node has a loop-free path  to the destination.  In \cite{tora}, one of the link-reversal algorithms was used to design a routing protocol (called TORA) for multihop wireless networks. Although these algorithms provide loop free paths and guarantee connectivity from the nodes to the destination, they do not maximize throughput. Thus, the main goal of this paper is to create a new link-reversal algorithm and combine it with the backpressure algorithm to construct a distributed throughput optimal algorithm with improved delay performance.

The main contributions of this paper are as follows:
\begin{itemize}
\itemsep0em 
\item  For a DAG with a single commodity, we study the  lexicographic optimization of the queue growth rate.
We develop a novel link-reversal algorithm that reverses link direction based on overload conditions to form a new DAG
with lexicographically smaller queue growth rates. 
\item We show that the queue backlog information from backpressure routing can be used to perform the link reversals. We develop loop free backpressure (LFBP) algorithm, a  distributed routing scheme  that eliminates loops and  retains the throughput optimality property.
\item Our simulation results of LFBP show a significant delay improvement over backpressure in static and dynamic networks. 
\item We extend the LFBP algorithm to networks with multiple commodities, and provide a simulation result to show its delay improvement over backpressure.
 \end{itemize}

\section{System Model and Definitions}
\subsection{Network model}
We consider the problem of routing single-commodity data packets in a network. The network is represented by a graph $G=(N,E)$, where $N$ is the set of nodes and $E$ is the set of undirected links $\{i, j\}$ with capacity $c_{ij}$. Packets arrive at the source node $s$ at rate $\lambda$  and are destined for a receiver node $d$.  Let $f^{\max}$ denote the maximum flow from node $s$ to  $d$ in the network $G$.
The quantity $f^{\max}$ is the maximally achievable throughput at the destination node $d$. 

To avoid unnecessary routing loops, we restrict forwarding along a directed acyclic graph (DAG) embedded in the graph $G$. 
An optimal DAG exists to support the max-flow $f^{\max}$ and can be found by: 
(i) computing a feasible flow allocation $(f_{ij})$ that yields the max-flow $f^{\max}$ in  $G$ (e.g.~using \cite{FordFulkerson}); 
(ii) trimming any positive flow on  directed cycles; 
(iii) defining an embedded DAG by assigning  a direction for each link $\{i, j\}$ according to the direction of the flow $f_{ij}$ on that link. 
Since backpressure achieves the max-flow of a constrained graph \cite{neely_georgiadis}, performing backpressure routing over the optimal DAG supports $\lambda$.


This centralized approach is unsuitable for mobile ad-hoc networks, which are based on wireless links with  time-varying  capacities and may undergo frequent topology changes. 
In such situations,  the  optimal embedded DAG  also changes with time, which  requires  constantly repeating the above offline process.
 Instead, 
it is possible to use a distributed  adaptive mechanism that   reverses the direction of links until  a DAG that supports the current traffic demand is found.
In this paper we propose  an  algorithm that  reacts to the traffic conditions by changing the directions of some links. 
 To understand the properties of 
 the link-reversing operations, we first study the fluid level behavior of a network under overload conditions.

\subsection{Flow equations}

Consider an embedded DAG $D_{k} = (N_{k}, E_{k})$ in the network graph $G$, where $N_{k} = N$ is the set of network nodes and $E_{k}$ is the set of directed links.\footnote{The notation $D_{k}$ of an embedded DAG is useful in the paper; it will denote the DAG that is formed after the $k$th iteration of the link-reversal algorithm.} For each link $\{i, j\}\in E$, either $(i, j)$ or $(j, i)$ belongs to $E_{k}$ (but not both). Each directed link $(i, j)$ has the capacity of the undirected counterpart $\{i,j\}$, which is $c_{ij}$.
Let $f^{\max}_{k}$ be the maximum flow of the DAG $D_{k}$ from the source node $s$ to the destination node $d$.
Any embedded DAG has smaller or equal max-flow with respect to $G$,  $f^{\max}_{k}\leq f^{\max}.$ 

For two disjoint subsets $A$ and $B$ of nodes in $D_{k}$, we define $\capp_k(A, B)$ as the total capacity of the directed links going from $A$ to $B$, i.e.,
\begin{equation} \label{eq:202}
\capp_k(A, B) = \sum_{(i, j) \in E_{k}: i\in A,\, j\in B} c_{ij}.
\end{equation}

A cut 
 is 
 a partition of nodes $(A,A^c)$ such that $s \in A$ and $d \in A^c$. A cut $(A_k,A_k^{c})$ is 
 a min-cut if 
it minimizes the expression $\capp_k(A_k,A_k^{c}) $ over all cuts. 
By the max-flow min-cut theorem $f^{\max}_k = \capp_k(A_k,A_k^{c})$, where $(A_k,A_k^{c})$ is the min-cut of the DAG $D_k$.
We remark that a cut in a DAG is also a cut in $G$ or another DAG.
However, the value of $\capp_k(.,.)$ depends on the graph considered (see summation in (\ref{eq:202})).

We consider the network as a time-slotted system, where slot $t$ refers to the time interval $[t, t+1)$, $t\in\{0, 1, 2, \ldots\}$. Each network node $n$ maintains a queue $Q_{n}(t)$, where $Q_{n}(t)$ also denotes the queue backlog at time $t$. We have $Q_{d}(t)=0$ for all $t$ since the destination node $d$ does not buffer packets. Let $A(t)$ be the number of exogenous packets arriving at the source node $s$ in slot $t$. Under a routing policy that forwards packets over the directed links defined by the DAG $D_{k}$, let $F_{ij}(t)$ be the number of packets that are transmitted over the directed link $(i, j) \in E_{k}$ in slot $t$; the link capacity constraint states that $F_{ij}(t)\leq c_{ij}$ for all $t$. The queues $Q_{n}(t), n\ne d,$ are updated over slots according to
\begin{multline} \label{eq:101}
Q_n(t) = Q_{n}(t-1) + 1_{[n=s]} A(t) \\
+ \sum_{i:(i,n) \in E_{k} } F_{in}(t) - \sum_{j:(n,j) \in E_{k} } F_{nj}(t),
\end{multline}
where $1_{[\cdot]}$ is an indicator function. 

To study the overload behavior of the system we define
the queue overload (i.e., growth) rate at node $n$ as
\begin{equation} \label{eq:104}
q_n = \lim_{t\rightarrow \infty} \frac{Q_n(t)}{t}.
\end{equation}
Additionally, define the exogenous packet arrival rate $\lambda$ and  the flow $f_{ij}$ over a directed link $(i, j)$  as
\begin{equation*} \label{eq:102}
\lambda = \lim_{t\rightarrow \infty} \frac{1}{t}\sum_{\tau=0}^{t-1} A(\tau),\quad f_{ij} = \lim_{t\rightarrow \infty} \frac{1}{t}\sum_{\tau=0}^{t-1} F_{ij}(\tau),
\end{equation*}
where the above limits are assumed to exist almost surely (see \cite{georgiadis} for details). Using the recursion (\ref{eq:101}), taking time averages and letting $t\to\infty$, we have the fluid-level equation:
\begin{align} 
q_n &=  1_{[n=s]} \lambda+ \sum_{i: (i,n) \in E_k}  f_{in} - \sum_{j: (n,j) \in E_k}  f_{nj},\ \forall n \in N\backslash \{d\} \label{eq:105} \\ 
0 &\le f_{ij} \le c_{ij},\ \forall (i, j) \in E_k. \label{eq:106}
\end{align}
Equations~\eqref{eq:105} and~\eqref{eq:106} are the flow conservation and link capacity constraints, respectively. A network node $n$ is said to be overloaded if its queue growth rate $q_{n}$ is positive, which implies that $Q_n(t) \rightarrow \infty$ as $t \rightarrow \infty$ (see~\eqref{eq:104} and~\cite{Nee10book}). Summing~\eqref{eq:105} over $n\in N$ yields
\begin{equation} \label{eq:107}
\sum_{n\in N\backslash \{d\}} q_{n} = \lambda - \sum_{i: (i, d)\in E_{k}} f_{id},
\end{equation}
where $\sum_{i: (i, d)\in E_{k}} f_{id}$ denotes the throughput received at the destination  $d$.
Therefore, equation~\eqref{eq:107} states that the received throughput 
is equal to the exogenous arrival rate $\lambda$ less the sum of queue growth rates $\sum_{n\in N} q_{n}$ in the network.

\subsection{Properties of queue overload vector}

If the traffic arrival rate $\lambda$ is strictly larger than the maximum flow $f^{\max}_{k}$ of the DAG $D_{k}$, then some network nodes will be overloaded. It is because, from~\eqref{eq:107}, we have
\begin{equation} \label{eq:201}
\sum_{n\in N} q_{n} = \lambda - \sum_{i: (i, d) \in E_{k}} f_{id} \geq \lambda - f^{\max}_{k} > 0,
\end{equation}
which implies that $q_{n}>0$ for some node $n\in N$. Let $\bm{q}=(q_{n})_{n\in N}$ be the queue overload vector. A queue overload vector $\bm{q}$ is feasible in the DAG $D_{k}$ if there exist overload rates $(q_{n})_{n\in N}$ and flow variables $(f_{ij})_{(i, j)\in E_{k}}$ that satisfy~\eqref{eq:105} and~\eqref{eq:106}. Let $\mathcal{Q}_{k}$ be the set of all feasible queue overload vectors in $D_{k}$. We are interested in the \emph{lexicographically smallest} queue overflow vector in set $\mathcal{Q}_{k}$. Formally, given a vector $\bm{u} = (u_{1}, \ldots, u_{N})$, let $\overline{u}_{i}$ be the $i$th maximal component of $\bm{u}$. We say that a vector $\bm{u}$ is \emph{lexicographically smaller} than a vector $\bm{v}$, denoted by $\bm{u} <_{\lex} \bm{v}$, if $\overline{u}_{1} < \overline{v}_{1}$ or $\overline{u}_{i} = \overline{v}_{i}$ for all $i=1, ..., (j-1)$ and $\overline{u}_{j} < \overline{v}_{j}$ for some $j=2,\dots, N$. If $\overline{u}_{i} = \overline{v}_{i}$ for all $i$, then the two vectors are lexicographically equal, represented by $\bm{u} =_{\lex} \bm{v}$.\footnote{As an example, the two vectors $\bm{u}=(3, 2, 1, 2, 1)$ and $\bm{v}=(1, 2, 3, 2, 2)$ satisfy $\bm{u} <_{\lex} \bm{v}$ because $\overline{u}_{1} = \overline{v}_{1} = 3$, $\overline{u}_{2} = \overline{v}_{2} = \overline{u}_{3} = \overline{v}_{3} = 2$, and $\overline{u}_{4} =1 < \overline{v}_{4} = 2$.} 
The above-defined vector comparison induces a total order on the set  $\mathcal{Q}_{k}$, and hence the existence of a lexicographically smallest vector is always   guaranteed \cite{georgatsos}.

\begin{lemma}[\cite{georgiadis}] \label{lem:201}
Let $\bm{q}_{k}^{\text{min}}$ be the lexicographically smallest vector in the queue overload region $\mathcal{Q}_{k}$ of the DAG $D_{k}$. We have the following properties:
\begin{enumerate}
\itemsep0em 
\item The vector $\bm{q}_{k}^{\text{min}}$ exists and is unique in the set $\mathcal{Q}_{k}$.
\item The vector $\bm{q}_{k}^{\text{min}}$ minimizes the sum of queue overload rates, i.e., it is a solution to the optimization problem:
\[
\text{minimize } \sum_{n\in N} q_{n}, \ \text{subject to } \bm{q} \in \mathcal{Q}_{k}
\]
(direct consequence of Theorem 1 in \cite{georgiadis}). Due to~\eqref{eq:107}, the corresponding throughput is maximized.
\item  A feasible flow allocation vector $(f_{ij})_{(i, j) \in E_{k}}$ induces $\bm{q}_{k}^{\text{min}}$ if and only if over each link $(i, j)\in E_{k}$ the following holds:
\begin{align}
&\text{if } q_i < q_j, \text{ then } f_{ij} = 0, \label{property1}\\
&\text{if }q_i > q_j, \text{ then } f_{ij} = c_{ij} \label{property2}.
\end{align} 
\end{enumerate}

\end{lemma}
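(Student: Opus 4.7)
My approach is to establish the three properties in sequence, exploiting the convex polyhedral structure of $\mathcal{Q}_k$ together with standard lex/majorization arguments. For property~1, first observe that $\mathcal{Q}_k$ is the projection onto the $\bm{q}$-coordinates of the polytope cut out by \eqref{eq:105}--\eqref{eq:106}, and since $0\leq q_n\leq\lambda$ for every $n$, the set $\mathcal{Q}_k$ is a compact convex polyhedron. Existence of a lex-min element then follows by minimizing the largest sorted entry, then the second-largest, and so on; each stage is a linear program over a nonempty compact set, so the minimum is attained and the nested intersection is nonempty. For uniqueness, if $\bm{q}\neq\bm{q}'$ were both lex-min, the midpoint $\tfrac{1}{2}(\bm{q}+\bm{q}')$ lies in $\mathcal{Q}_k$ by convexity and is strictly majorized by whichever of $\bm{q},\bm{q}'$ attains the larger rank-$j$ entry, where $j$ is the first index at which the sorted permutations disagree. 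This would produce a feasible vector lex-smaller than both, a contradiction.

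Property~2 follows from the flow identity \eqref{eq:107}, which states that $\sum_{n\in N\setminus\{d\}}q_n$ equals $\lambda$ minus the throughput received at $d$. Thus minimizing the sum of overloads is equivalent to maximizing the received throughput, and the minimum value of the sum is $\lambda-f^{\max}_k$ by max-flow/min-cut. That $\bm{q}_k^{\min}$ attains this minimum follows because if it did not, there would exist a feasible $\bm{q}'$ with strictly smaller sum; then a sufficiently small convex combination of $\bm{q}_k^{\min}$ with $\bm{q}'$, possibly followed by a local redistribution along an edge as in property~3, would yield a strictly lex-smaller vector, contradicting lex-minimality. This is exactly the content of Theorem~1 in \cite{georgiadis}.

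The heart of the proof is property~3. For necessity, I would use a local exchange argument: suppose the flow $(f_{ij})$ induces $\bm{q}_k^{\min}$ but violates \eqref{property1}, so there is $(i,j)\in E_k$ with $q_i<q_j$ and $f_{ij}>0$. Decreasing $f_{ij}$ by a small $\delta>0$ leaves every $q_n$ unchanged except $q_i\to q_i+\delta$ and $q_j\to q_j-\delta$. Choosing $\delta<(q_j-q_i)/2$ yields a strict Robin-Hood transfer from a larger value to a smaller one, so the new vector is strictly majorized by, and hence strictly lex-smaller in sorted order than, $\bm{q}_k^{\min}$, a contradiction. A symmetric argument handles \eqref{property2}. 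Sufficiency follows from uniqueness in property~1: starting from any flow inducing a feasible $\bm{q}$, one can repeatedly apply the exchange above to drive the sorted vector down in lex-order, and every fixed point of this process satisfies \eqref{property1}--\eqref{property2}; uniqueness of the lex-min vector then forces any such fixed flow to induce exactly $\bm{q}_k^{\min}$.

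The step I expect to be the main obstacle is the rigorous verification, shared by the uniqueness argument in property~1 and the necessity direction of property~3, that a single Robin-Hood transfer between two coordinates strictly decreases the sorted vector in lexicographic order. Majorization gives only weak inequalities for partial sums of sorted entries, so one must track carefully how the two perturbed entries $q_i+\delta$ and $q_j-\delta$ move through the sorted sequence in order to convert strict majorization into a strict first-difference in the sorted vector, which is what lex-minimality actually requires.
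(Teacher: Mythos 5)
The paper does not prove Lemma~\ref{lem:201} at all: it is stated as a citation to Georgiadis and Tassiulas \cite{georgiadis}, with property~2 explicitly labeled ``direct consequence of Theorem~1 in \cite{georgiadis}.'' So you are reconstructing a proof the paper elects to import, and your reconstruction should be judged on its own terms. The general architecture you propose---convex polyhedral structure of $\mathcal{Q}_k$, successive optimization for existence, and local Pigou--Dalton exchanges for the flow characterization---is the standard toolkit for this kind of lexicographic result and is in the spirit of the cited reference.

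That said, two steps have genuine gaps, and the one you flag as the main obstacle is actually not problematic. The Robin--Hood step is fine: for two vectors of equal total sum, a strict Dalton transfer (moving $\delta<(q_j-q_i)/2$ from the larger entry to the smaller one) yields strict majorization, and strict majorization between equal-sum vectors directly implies strict lexicographic decrease (the first sorted index $\ell$ at which the two sorted vectors differ must satisfy $\bar{q}'_\ell < \bar{q}_\ell$, else the $\ell$th partial sum would violate majorization). So no delicate tracking is needed there.

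The real gaps are in property~1 uniqueness and in the sufficiency direction of property~3. For uniqueness, your phrase ``the first index at which the sorted permutations disagree'' is vacuous: if $\bm{q}$ and $\bm{q}'$ are both lex-min, their sorted vectors are identical by definition, so there is no such index and neither one ``attains the larger rank-$j$ entry.'' What you actually need to show is that the midpoint of two \emph{distinct} vectors with \emph{identical} sorted vectors is \emph{strictly} lex-smaller than that common sorted vector; this follows from the convexity of the top-$k$ partial-sum functionals $\Phi_k(\bm{q})=\max_{|S|=k}\sum_{n\in S}q_n$ together with an argument that equality $\Phi_k(\bm{m})=\Phi_k(\bm{q})$ for all $k$ would force $\bm{q}=\bm{q}'$. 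You do not carry this out. For sufficiency in property~3 (that any flow obeying \eqref{property1}--\eqref{property2} induces $\bm{q}_k^{\min}$, not merely that $\bm{q}_k^{\min}$-inducing flows obey them), your ``repeatedly apply the exchange and every fixed point satisfies the conditions'' argument only shows that the exchange process terminates at a flow satisfying the local conditions; it does not rule out the possibility of distinct fixed points inducing distinct overload vectors. You would need to show that the local optimality conditions \eqref{property1}--\eqref{property2} are not merely necessary but in fact determine $\bm{q}$ uniquely---an argument that uses the min-cut/flow structure and is substantially the content of the cited theorem.
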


In general, there are many flow allocations that yield the maximum throughput. Focusing on those that additionally induce $\bm{q}^{\text{min}}_{k}$ has two advantages. First, as  shown next, these allocations lead to link-reversal operations that improve the max-flow of the DAG $D_{k}$. Second, the backpressure algorithm can be used to preform the same reversals and improve the max-flow; we will use this observation in Section~\ref{dynamic algorithm} to combine link-reversal algorithms with backpressure routing.

\section{Link-Reversal Algorithms} \label{sec:link_reversal}

The link-reversal algorithms given in \cite{gafni_bertsekas} were designed to maintain a path from each node in the network to the destination. One algorithm relevant to this paper is  the \emph{full reversal method}. This algorithm is triggered when some nodes $n \ne d$ lose all of their outgoing links. At every iteration of the algorithm, nodes $n$, that have no outgoing link, reverse the direction of all their incoming links. This process is repeated until all the nodes other than the destination have at least one outgoing link. When the process stops these nodes are guaranteed to have a path to the destination.
The example in Figure \ref{fig:gb_example}, taken from \cite{gafni_bertsekas}, illustrates this algorithm at work. 
\begin{figure}[h!]
\centering
\subfigure[]{
\begin{overpic}[scale=.42]	{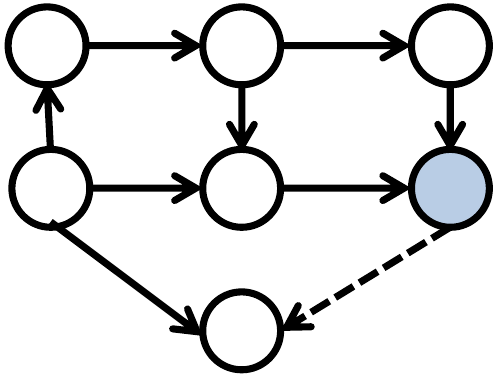}
	\put (45,4) {\small $d$}
\end{overpic}
\label{fig:gb0}
}
\subfigure[]{
\begin{overpic}[scale=.42]	{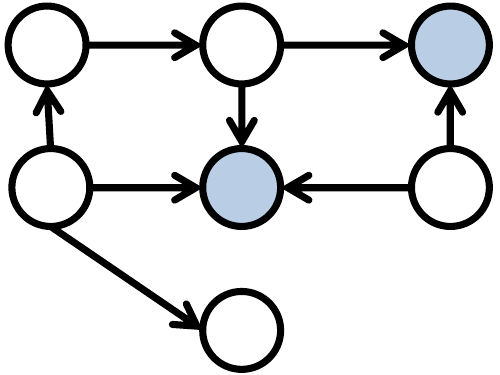}
	\put (45,4) {\small $d$}
\end{overpic}
\label{fig:gb1}
}
\subfigure[]{
\begin{overpic}[scale=.42]	{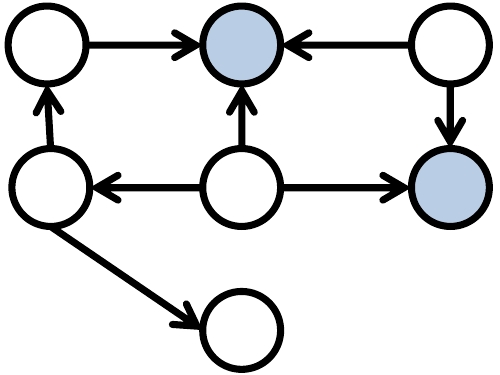}
	\put (45,4) {\small $d$}
\end{overpic}
\label{fig:gb2}
}
\subfigure[]{
\begin{overpic}[scale=.42]	{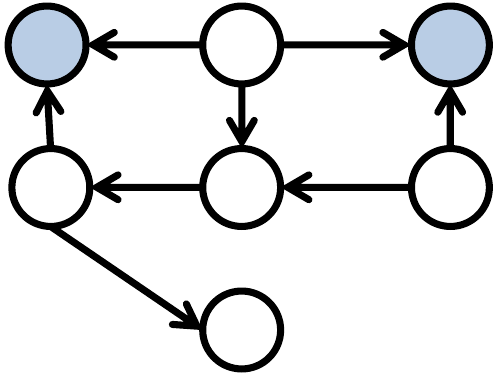}
	\put (45,4) {\small $d$}
\end{overpic}
\label{fig:gb3}
}
\subfigure[]{
\begin{overpic}[scale=.42]	{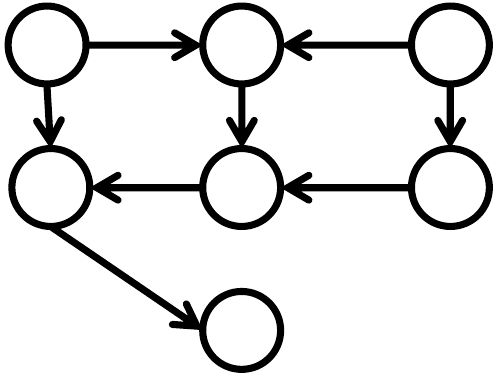}
	\put (45,4) {\small $d$}
\end{overpic}
\label{fig:gb4}
}
\caption{Illustration of the full reversal method of \cite{gafni_bertsekas} when the dashed link in Figure \ref{fig:gb0}  is lost. At every iteration, the algorithm reverses all the links incident to the nodes with no outgoing link.}
\label{fig:gb_example}
\end{figure}

Although the full reversal algorithm guarantees connectivity, the resulting throughput may be significantly lower than the maximum possible. Hence, in this paper we shift the focus from connectivity to maximum throughput. Specifically, we propose a novel link-reversal algorithm that produces a DAG which supports the traffic demand $\lambda$, assuming $\lambda \le f^{\max}$. We do this by quickly constructing an initial DAG and improving upon it in multiple iterations.

\subsection{Initial DAG}  \label{sec: initial dag}
We assume that each node in the network has a unique ID. These IDs give a topological ordering to the nodes. So, the initial DAG can be created simply by directing each link to go from the node with the lower ID to the node with the higher ID. If the unique IDs are not available, the initial DAG can be created by using a strategy such as the one given in \cite{tora}.

\subsection{Overload detection}

Given a DAG $D_{k}$, $k=0, 1, 2, \ldots$, we suppose that there is a routing policy $\pi$ that yields the lexicographically minimal queue overload vector $\bm{q}^{\text{min}}_{k}$. \footnote{Such a policy $\pi$ can simply solve an optimization problem offline to compute the required flow allocation. In Section~\ref{dynamic algorithm}, we develop a distributed algorithm using backpressure that does not require the computation of the lexicographically optimal overload vector. We use this vector only to prove the properties of our link-reversal algorithm.} 
Then we use the vector $\bm{q}^{\text{min}}_{k}$
to detect node overload and decide whether a link should be reversed.

If the data arrival rate $\lambda$ is less than or equal to the maximum flow $f^{\max}_{k}$ of the DAG $D_{k}$, then there exists a flow allocation $(f_{ij})$ that supports the traffic demand and yields zero queue overload rates $q_{n}=0$ at all nodes $n\in N$. By  the second property of Lemma~\ref{lem:201} and nonnegativity of the overload vector,
 the queue overload vector $\bm{q}^{\text{min}}_{k}$ is zero.
 Thus, the throughput under policy $\pi$ is  $\lambda$  according to~\eqref{eq:107}, and the current DAG $D_{k}$  supports $\lambda$; no link-reversal operations are needed. 

On the other hand, if the arrival rate $\lambda$ is strictly larger than the maximum flow $f^{\max}_{k}$, 
by the second property in Lemma~\ref{lem:201}  
the maximum throughput is $f^{\max}_{k}$ and
the queue overload vector $\bm{q}^{\min}_{k} = (q_{k,n}^{\text{min}})_{n\in N}$ is nonzero because we have  from~\eqref{eq:201} that
\[
\sum_{n\in N} q_{k,n}^{\text{min}} > \lambda - f^{\max}_{k} > 0.
\]
We may therefore detect the event ``DAG $D_k$  supports $\lambda$'' by testing whether  the overload vector 
$\bm{q}^{\min}_{k}$ is zero or  non-zero.

The next lemma shows that if DAG $D_k$  does not support $\lambda$ then it contains at least one  under-utilized link (our link-reversal algorithm will reverse the direction of such links to improve network throughput).
\begin{lemma}\label{lem: backward link}
Suppose that the traffic demand $\lambda$ satisfies
\[
f^{\max}_{k}<\lambda\leq f^{\max}.
\]
where $f^{\max}_{k}$ is the max-flow of the DAG $D_{k}$ and $f^{\max}$ is the max-flow of the undirected network $G$.
Then 
 there exists a link $(i, j)\in E_{k}$ such that $q_{k,i}^{\text{min}}=0$ and $q_{k,j}^{\text{min}}>0$.
\end{lemma}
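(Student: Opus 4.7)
The plan is to argue by contradiction, showing that the absence of such an under-utilized link forces the existence of an $s$-$d$ cut in $G$ with capacity equal to $f^{\max}_{k}$, contradicting $f^{\max}_{k}<f^{\max}$.

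First I would partition $N$ into $B=\{n:q_{k,n}^{\text{min}}>0\}$ and $A=N\setminus B$. The overload condition $\lambda>f^{\max}_{k}$ together with inequality~\eqref{eq:201} and Lemma~\ref{lem:201}(2) guarantees $B\neq\emptyset$, and since $Q_{d}(t)=0$ for all $t$ we have $d\in A$. Suppose for contradiction that no link $(i,j)\in E_{k}$ satisfies $i\in A,\, j\in B$; equivalently, $\capp_{k}(A,B)=0$. I would then sum the flow-conservation equation~\eqref{eq:105} over $n\in B$, using that flows whose both endpoints lie in $B$ cancel, to obtain
\begin{equation*}
\sum_{n\in B} q_{k,n}^{\text{min}} \;=\; \lambda\,\mathbf{1}_{[s\in B]} \;+\; \sum_{\substack{(i,n)\in E_{k}\\ i\in A,\, n\in B}} f_{in} \;-\; \sum_{\substack{(n,j)\in E_{k}\\ n\in B,\, j\in A}} f_{nj}.
\end{equation*}
The first sum on the right vanishes by the contradiction hypothesis.

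If $s\in A$, the right-hand side is non-positive, contradicting $\sum_{n\in B} q_{k,n}^{\text{min}}>0$. Thus $s\in B$, making $(B,A)$ an $s$-$d$ cut in both $D_{k}$ and $G$. Now I would invoke Lemma~\ref{lem:201}(3): for every link $(n,j)\in E_{k}$ with $n\in B$ and $j\in A$ we have $q_{k,n}^{\text{min}}>0=q_{k,j}^{\text{min}}$, hence~\eqref{property2} gives $f_{nj}=c_{nj}$. Consequently the total flow across the cut equals $\capp_{k}(B,A)$, so the displayed identity reduces to $\sum_{n\in B} q_{k,n}^{\text{min}} = \lambda-\capp_{k}(B,A)$. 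Combining this with the global balance $\sum_{n\in N} q_{k,n}^{\text{min}}=\lambda-f^{\max}_{k}$ (Lemma~\ref{lem:201}(2) applied to~\eqref{eq:107}) yields $\capp_{k}(B,A)=f^{\max}_{k}$.

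Finally, because $\capp_{k}(A,B)=0$, the total undirected capacity of the cut $(B,A)$ in $G$ equals $\capp_{k}(A,B)+\capp_{k}(B,A)=f^{\max}_{k}$. By the max-flow min-cut theorem applied to $G$, this forces $f^{\max}\leq f^{\max}_{k}$, contradicting the hypothesis $f^{\max}_{k}<\lambda\leq f^{\max}$. The main obstacle, and the step that does the real work, is the passage from "no $A$-to-$B$ link exists" to "$\capp_{k}(B,A)=f^{\max}_{k}$"; this relies crucially on the third property of Lemma~\ref{lem:201}, which saturates the reverse-direction links and lets one equate a cut in $D_{k}$ with a cut in the undirected graph $G$.
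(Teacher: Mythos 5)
Your proposal is correct and follows the same overall strategy as the paper: define the overloaded set under a lexicographically-optimal flow, assume for contradiction that no directed link crosses from the non-overloaded side into the overloaded side, and conclude via the max-flow min-cut theorem applied to $G$ that $f^{\max}\leq f^{\max}_{k}$, a contradiction. The one notable difference is that the paper dispatches the hard work by citing its Lemma~\ref{MIN-CUT} (Appendix A), which already establishes that $(A_{k},A_{k}^{c})$ is a min-cut of $D_{k}$ with $s\in A_{k}$ and $\capp_{k}(A_{k},A_{k}^{c})=f^{\max}_{k}$, whereas you re-derive exactly those two facts inline by summing the flow-conservation equations over the overloaded set and invoking the saturation property~\eqref{property2}; your derivation closely parallels the appendix proof of Lemma~\ref{MIN-CUT} itself, so the substance is the same but your version is self-contained and perhaps more transparent about where each hypothesis (notably $\sum_{n}q^{\min}_{k,n}>0$ and property~\eqref{property2}) is actually used.
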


\begin{proof}[of Lemma~\ref{lem: backward link}]
Let $A_{k}$ be the set of overloaded nodes  under a flow allocation that induces the lexicographically minimal overload vector ${\bm q}^{\min}_{k}$ in the DAG $D_{k}$; the set $A_{k}$ is nonempty due to $\lambda > f^{\max}_{k}$ and~\eqref{eq:201}. It follows that the partition $(A_{k}, A_{k}^{c})$ is a min-cut of $D_{k}$ (see Lemma~\ref{MIN-CUT} in the Appendix).\footnote{The set $A_{k}^{c}$ contains the destination node $d$ and is nonempty.} By the max-flow min-cut theorem, the capacity of the min-cut $(A_{k}, A_{k}^{c})$ in $D_{k}$ satisfies $\capp_k(A_{k}, A_{k}^c) =  f^{\max}_k < f^{\max}$.

The proof is by contradiction. Let us assume that there is no directed link that goes from the set $A_{k}^c$ to $A_{k}$ in the DAG $D_k$. It follows that $\capp_k(A_{k}, A_{k}^c)$ is the sum of capacities of all undirected links between the sets $A_{k}$ and $A_{k}^c$, i.e., 
\[
\capp_k(A_{k}, A_{k}^{c}) = \sum_{i\in A_{k},\, j\notin A_{k}} c_{ij},
\]
which is equal to the value of the cut $(A_k,A_k^c)$ in graph $G$.
Since the value of any cut is larger or equal to the min-cut, applying the max-flow min-cut theorem on $G$ we have
\begin{align*}
f^{\max} \leq  \sum_{i\in A_{k},\, j\notin A_{k}} c_{ij} =
 \capp_k(A_{k}, A_{k}^{c}) = f_{k}^{\text{max}},
\end{align*}
which contradicts the assumption that $f_{k}^{\max} < \lambda \leq f^{\max}$.
\end{proof}

\subsection{Link reversal}

Lemma~\ref{lem: backward link} shows that if the DAG $D_{k}$ has insufficient capacity to support the traffic demand $\lambda \leq f^{\max}$, then there exists a directed link from an underloaded node $i$ to an overloaded one $j$ under the lexicographically minimum overflow vector $\bm{q}_{k}^{\text{min}}$. 
Because of property \eqref{property1}, we may infer that this link is not utilized. Next we show that reversing the direction of this link provides a strictly improved DAG.

We consider the link-reversal algorithm (Algorithm \ref{alg: link reversal}) that reverses all such links that satisfy the property in Lemma~\ref{lem: backward link}. This reversal yields a new directed graph $D_{k+1} = (N, E_{k+1})$.
\begin{algorithm}[h!] 
\caption{Link-Reversal Algorithm}
\label{alg: link reversal}
\begin{algorithmic}[1]
	\ForAll {$(i,j) \in E_k$}
		\If{$q_{k, i}^{\min}=0 \text{ and } q_{k, j}^{\min} > 0$} 
			\State $(j,i) \in E_{k+1} $
		\Else
			\State $(i,j) \in E_{k+1}$
		\EndIf
	\EndFor
\end{algorithmic}
\end{algorithm}

\begin{lemma} \label{lem: graph is dag}
The directed graph $D_{k+1}$ is acyclic.
\end{lemma}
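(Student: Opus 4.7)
The plan is to prove this by contradiction: assume $D_{k+1}$ has a directed cycle $C$ and derive an inconsistency with the fact that $D_{k}$ is a DAG. The key structural observation I would build on is that the reversal rule in Algorithm~\ref{alg: link reversal} is controlled entirely by whether the endpoints lie in the overloaded set $O := \{n : q_{k,n}^{\min} > 0\}$ or the underloaded set $U := \{n : q_{k,n}^{\min} = 0\}$. So the natural first step is to classify the edges of $E_{k+1}$ according to which of these two sets their endpoints belong to.

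Carrying out that classification, I would show three things. First, a link between two nodes of $U$ cannot satisfy the reversal condition ($q_{k,i}^{\min}=0$, $q_{k,j}^{\min}>0$ requires $j\in O$), so such a link appears in $E_{k+1}$ with the same orientation as in $E_k$. Second, the symmetric argument: a link between two nodes of $O$ cannot be reversed either (since the reversal condition needs the tail to be in $U$), so it keeps its orientation. Third, for a link between $U$ and $O$, whether the original orientation was $U\to O$ (in which case it gets reversed) or $O\to U$ (in which case it is kept), the resulting orientation in $E_{k+1}$ is always $O\to U$. Consequently, $D_{k+1}$ contains \emph{no} directed edges from $U$ to $O$.

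Now I would use these three facts on the hypothetical cycle $C$. Because $D_{k+1}$ has no $U\to O$ edges, $C$ cannot traverse from $U$ into $O$; so if it ever enters $O$ it can never return to $U$ (and vice versa). Since $C$ is a cycle, it must lie entirely in $U$ or entirely in $O$. But in either case, by the first two facts above, every edge of $C$ also belongs to $E_k$ with the same orientation, so $C$ is a directed cycle in $D_k$, contradicting that $D_k$ is a DAG.

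The argument is essentially combinatorial, so I do not expect a serious technical obstacle; the only subtlety is being careful about the case analysis for edges with one endpoint in $U$ and one in $O$, making sure that in both subcases (original orientation $U\to O$ reversed, or $O\to U$ retained) the outgoing edge in $D_{k+1}$ is directed from $O$ to $U$. Note that this proof does not use anything about $\lambda$ or the max-flow inequality; it only uses the definitions of $O$, $U$, and the reversal rule, so in particular it holds vacuously when $O$ is empty (no reversals happen) or when no $U$–$O$ edges exist.
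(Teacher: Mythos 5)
Your proof is correct and rests on exactly the same structural observation as the paper's: Algorithm~\ref{alg: link reversal} leaves the edges internal to $A_{k}$ and to $A_{k}^{c}$ untouched, and forces every cross edge to point from $A_{k}$ (your $O$) to $A_{k}^{c}$ (your $U$). The paper packages this constructively (delete the cross links to get two disjoint DAGs, then re-insert them all oriented $A_{k}\to A_{k}^{c}$, noting each insertion cannot close a cycle), whereas you argue by contradiction that any cycle in $D_{k+1}$ would have to lie wholly inside $O$ or wholly inside $U$ and hence already be a cycle in $D_{k}$; these are just two phrasings of the same argument.
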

\begin{proof}[of Lemma~\ref{lem: graph is dag}]
Recall that  $A_{k}$ is the set of overloaded nodes in the DAG $D_{k}$ under the lexicographically minimum queue overload vector $\bm{q}_{k}^{\text{min}}$. Let $L_{k}\subseteq E$ be the set of undirected links between $A_{k}$ and $A_{k}^{c}$. Algorithm~\ref{alg: link reversal} 
changes the link direction in a subset of $L_{k}$. More precisely, it 
enforces the direction of all links in $L_{k}$ to go from $A_{k}$ to $A_{k}^{c}$.

We complete the proof by construction in two steps. First, we remove all links in $L_{k}$ from the DAG $D_{k}$, resulting in two disconnected subgraphs that are DAGs themselves. Second, consider that we add a link in $L_{k}$ back to the network with the direction going from $A_{k}$ to $A_{k}^{c}$. This link addition does not create a cycle because there is no path from $A_{k}^{c}$ to $A_{k}$, and the resulting graph remains to be a DAG. We can add the other links in $L_{k}$ one-by-one back to the graph with the direction from $A_{k}$ to $A_{k}^{c}$; similarly, these link additions do not create cycles. The final directed graph is $D_{k+1}$, and it is a DAG. See Fig.~\ref{fig:reversal} for an illustration.
\end{proof}

\begin{figure}[h]
\centering
\subfigure[The DAG $D_k$ with $A_{k}=\{s, 2, 3, 5\}$.]{
\begin{overpic}[scale=.42]	{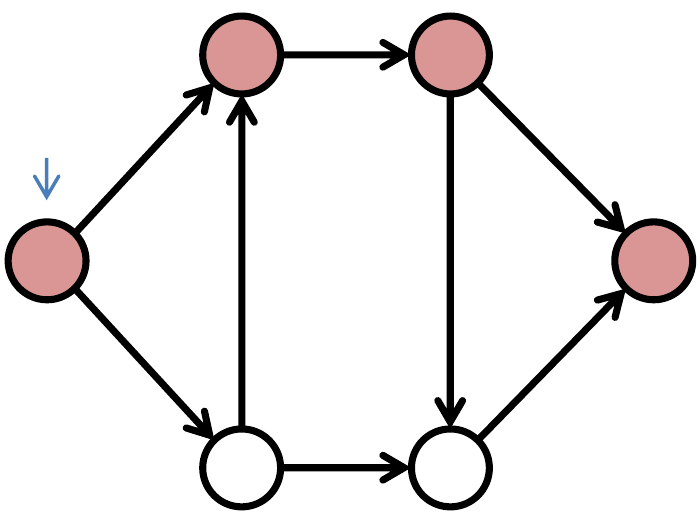}
	\put (4.5,36) {\small $s$}
	\put (32,64) {\small 2}
	\put (62,64) {\small 3}
	\put (91,35) {\small 5}
	\put (32,5) {\small 1}
	\put (62,5) {\small $d$}

	\put (0,55) {\small $\lambda$ = 3}
	\put (20,44) {\small 2}
	\put (20,26) {\small 1}

	\put (72,44) {\small 1}
	\put (72,26) {\small 1}

	\put (37,35) {\small 1}
	\put (55,35) {\small 1}

	\put (45,59) {\small 2}
	\put (45,10) {\small 1}
\end{overpic}
\label{fig:before}
}
\subfigure[Two disconnected DAGs formed by removing all links between $A_{k}$ and $A_{k}^{c}$.]{
\begin{overpic}[scale=.42]	{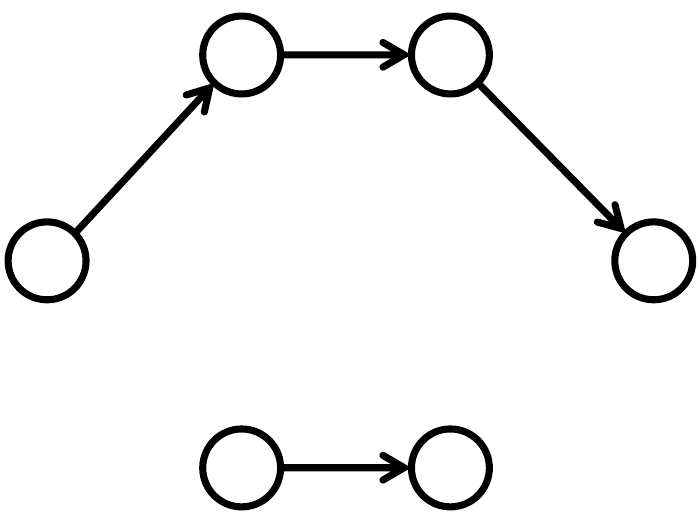}
	\put (4.5,36) {\small $s$}
	\put (32,64) {\small 2}
	\put (62,64) {\small 3}
	\put (91,35) {\small 5}
	\put (32,5) {\small 1}
	\put (62,5) {\small $d$}
\end{overpic}
\label{fig:separated}
}
\subfigure[The DAG $D_{k+1}$ formed by adding all links in $L_{k}$ back to the graph with the direction going from $A_{k}$ to $A_{k}^{c}$.]{
\begin{overpic}[scale=.42]	{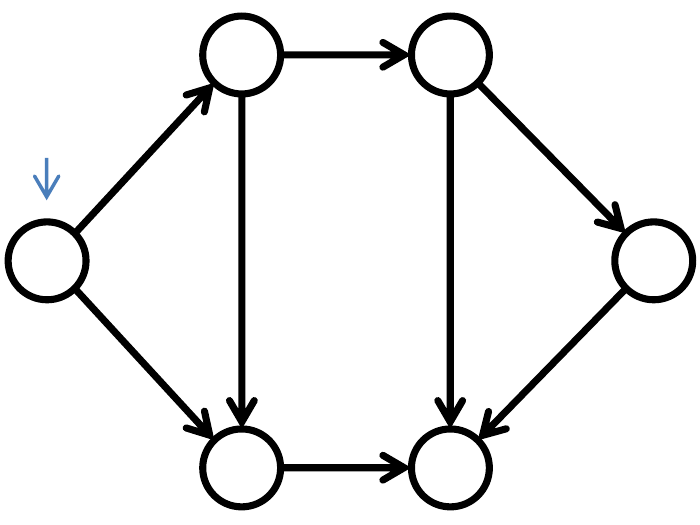}
	\put (4.5,36) {\small $s$}
	\put (32,64) {\small 2}
	\put (62,64) {\small 3}
	\put (91,35) {\small 5}
	\put (32,5) {\small 1}
	\put (62,5) {\small $d$}

	\put (0,55) {\small $\lambda$ = 3}
	\put (20,44) {\small 2}
	\put (20,26) {\small 1}

	\put (72,44) {\small 1}
	\put (72,26) {\small 1}

	\put (37,35) {\small 1}
	\put (55,35) {\small 1}

	\put (45,59) {\small 2}
	\put (45,10) {\small 1}
\end{overpic}
\label{fig:after}
}
\caption{Illustration for the proof of Lemma~\ref{lem: graph is dag}.}
\label{fig:reversal}
\end{figure}

The next lemma shows that the new DAG $D_{k+1}$ supports a lexicographically \emph{smaller} optimal overload vector (and therefore potentially better throughput) than the DAG $D_{k}$.
\begin{lemma} \label{lem: monotonicity}
Let $D_k$ be a DAG with the maximum flow $f_{k}^{\max} < \lambda \le f^{\max}$. The DAG $D_{k+1}$, obtained by performing  Algorithm~\ref{alg: link reversal} over $D_{k}$, has the lexicographically minimum queue overload vector satisfying $\bm{q}_{k+1}^{\text{min}} <_{\lex} \bm{q}_{k}^{\text{min}}$.
\end{lemma}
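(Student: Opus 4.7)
The plan is to exhibit an explicit feasible overload vector $\bm{q}'$ on $D_{k+1}$ that is lexicographically strictly smaller than $\bm{q}_k^{\min}$; the conclusion then follows from $\bm{q}_{k+1}^{\min} \leq_{\lex} \bm{q}' <_{\lex} \bm{q}_k^{\min}$ by the definition of the lex-minimum.

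First I would verify that $\bm{q}_k^{\min}$ itself belongs to $\mathcal{Q}_{k+1}$. Take any flow $(f_{ij})_{(i,j) \in E_k}$ inducing $\bm{q}_k^{\min}$ on $D_k$. By property~\eqref{property1} of Lemma~\ref{lem:201}, every link $(i_0, j_0) \in E_k$ that Algorithm~\ref{alg: link reversal} reverses carries zero flow, since $q_{k, i_0}^{\min} = 0 < q_{k, j_0}^{\min}$. I can therefore lift the allocation to $D_{k+1}$ by assigning zero flow to each reversed link $(j_0, i_0) \in E_{k+1}$ and leaving all other flows intact; every nodal balance in~\eqref{eq:105} is preserved, so the induced overload vector on $D_{k+1}$ is again $\bm{q}_k^{\min}$.

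Next I would apply Lemma~\ref{lem: backward link} to fix one reversed link $(j_0, i_0) \in E_{k+1}$ with $q_{k, j_0}^{\min} > 0$, $q_{k, i_0}^{\min} = 0$, and $c_{j_0 i_0} > 0$, and perturb the lifted flow by sending an additional $\epsilon > 0$ units along $(j_0, i_0)$ for some $\epsilon \leq \min(c_{j_0 i_0}, q_{k, j_0}^{\min})$. This preserves the capacity bound and the nonnegativity of $q_{j_0}$, and the resulting feasible vector $\bm{q}' \in \mathcal{Q}_{k+1}$ agrees with $\bm{q}_k^{\min}$ everywhere except that $q'_{j_0} = q_{k, j_0}^{\min} - \epsilon$ and $q'_{i_0} = \epsilon$.

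The core step is to check that $\bm{q}' <_{\lex} \bm{q}_k^{\min}$ for $\epsilon$ small. Shrink $\epsilon$ further so that it is strictly below every positive entry of $\bm{q}_k^{\min}$ and below the gap between $q_{k, j_0}^{\min}$ and the largest entry of $\bm{q}_k^{\min}$ strictly less than $q_{k, j_0}^{\min}$ (taking this gap to be $+\infty$ if no such entry exists). Under this choice the decreasingly sorted version of $\bm{q}'$ coincides with that of $\bm{q}_k^{\min}$ except that one occurrence of $q_{k, j_0}^{\min}$ is replaced in place by $q_{k, j_0}^{\min} - \epsilon$ and one trailing zero by $\epsilon$. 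Scanning the two lists from the top, they first disagree at the slot that used to hold the last copy of $q_{k, j_0}^{\min}$, where $\bm{q}'$ is strictly smaller, so $\bm{q}' <_{\lex} \bm{q}_k^{\min}$. The main obstacle is precisely this last step: the perturbation both lowers $q_{j_0}$ and raises $q_{i_0}$, and one has to rule out the possibility that the raise promotes $q_{i_0}$ past another component in the sorted order and offsets the lex gain; the gap-based choice of $\epsilon$ is what prevents this.
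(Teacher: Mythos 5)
Your proposal is correct and follows essentially the same approach as the paper: identify a reversed link $(a,b)$ with $q_{k,a}^{\min}=0$ and $q_{k,b}^{\min}>0$, lift the lex-optimal flow from $D_k$ to $D_{k+1}$ (possible since all reversed links carry zero flow by property~\eqref{property1}), and perturb by routing $\epsilon$ across the reversed link to obtain a feasible vector that is lexicographically strictly smaller. Your extra ``gap-based'' constraints on $\epsilon$ are sufficient but not actually needed --- the condition $\epsilon < q_{k,b}^{\min}$ alone guarantees that the first disagreement in the sorted vectors occurs at the position of the last copy of $q_{k,b}^{\min}$, where the new value is strictly smaller, regardless of where $\epsilon$ lands in the tail --- and the paper simply invokes ``$\epsilon$ sufficiently small.''
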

\begin{figure}[h]
\centering
\subfigure[Link $(a,b)$ before the link reversal.]{
\centering
\begin{overpic}[scale=.45, trim=0in 0in 0in .1in]	{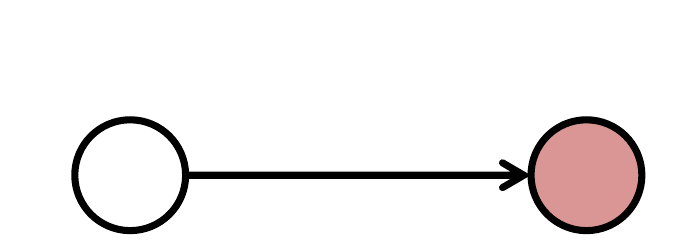}
	\put (16,8) {\small $a$}
	\put (81,8) {\small $b$}

	\put (5,25) {\small $q_a^{\min} = 0$}
	\put (70,25) {\small $q_b^{\min} > 0$}

	\put (40,15) {\small $f_{ab} = 0$}
\end{overpic}
}
\subfigure[Link $(b,a)$ after the link reversal.]{
\centering

\begin{overpic}[scale=.45, trim=0in 0in 0in .1in]{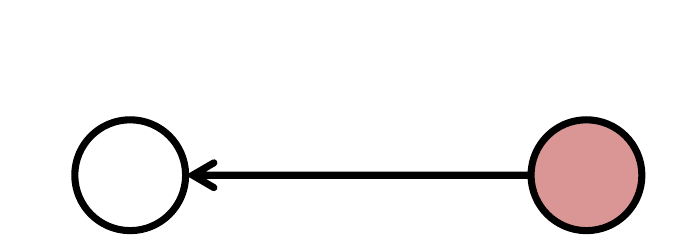}
	\put (16,8) {\small $a$}
	\put (81,8) {\small $b$}

	\put (5,25) {\small $\widehat{q}_a = \epsilon$}
	\put (65,25) {\small $\widehat{q}_b = q_b^{\min}-\epsilon$}

	\put (40,15) {\small $f'_{ba} = \epsilon$}
\end{overpic}
}

\caption{A link $\{a, b\}$ in the network in Fig.~\ref{fig:reversal} before and after link reversal. Before the reversal, the flow $f_{ab}$ is zero on $(a, b)$.  After the reversal, an $\epsilon$ flow can be sent over $(b, a)$ so that $(\widehat{q}_a, \widehat{q}_b) <_{\lex} (q_{k,a}^{\min}, q_{k, b}^{\min})$, while the rest of the flow allocation remains the same.}
\label{fig: monotonicity}
\end{figure}
\begin{proof}[of Lemma~\ref{lem: monotonicity}]
Consider a link $(a, b) \in E_{k}$ such that $q_{k,a}^{\text{min}}=0$ and  $q_{k,b}^{\text{min}}>0$; this link exists by Lemma~\ref{lem: backward link}. From the property (\ref{property1}), any feasible flow allocation $(f_{ij})$ that yields the lexicographically minimum overload vector $\bm{q}_{k}^{\text{min}}$ must have $f_{ab}=0$ over link $(a, b)$. The link-reversal algorithm reverses the link $(a, b)$ so that $(b, a) \in E_{k+1}$ in the DAG $D_{k+1}$. Consider the following feasible flow allocation $(f'_{ij})$ on the DAG $D_{k+1}$:
\[
f'_{ij} = \begin{cases}
	\epsilon& \text{if $(i, j)=(b, a)$} \\
	0 = f_{ji} & \text{if  $(i, j) \neq (b, a)$ but $(j, i)$ is reversed}\\
	f_{ij} & \text{if $(i, j)$ is not reversed}
	\end{cases}
\]
where $\epsilon < q_{k,b}^{\text{min}}$ is a sufficiently small value. In other words, the flow allocation $(f'_{ij})$ is formed by reversing links and keeping the previous flow allocation $(f_{ij})$ except that we forward an $\epsilon$-amount of overload traffic from node $b$ to $a$. Let $\widehat{\bm{q}}=(\widehat{q}_{n})_{n\in N}$ be the resulting queue overload vector. We have
\begin{align*}
\widehat{q}_{b} &= q_{k, b}^{\text{min}} - \epsilon < q_{k, b}^{\text{min}}, \text { } \widehat{q}_{a} = \epsilon > q_{k, a}^{\text{min}}=0, \text{ and }\\
\widehat{q}_{n} &= q_{k, n}^{\text{min}}, \ n\notin \{a, b\}.
\end{align*}
Therefore, $\widehat{\bm{q}} <_{\lex} \bm{q}_{k}^{\text{min}}$ (see Fig.~\ref{fig: monotonicity} for an illustration). Let $\bm{q}_{k+1}^{\text{min}}$ be the lexicographically minimal overload vector in $D_{k+1}$. It follows that $\bm{q}_{k+1}^{\text{min}} \leq_{\lex} \widehat{\bm{q}} <_{\lex} \bm{q}_{k}^{\text{min}}$, completing the proof.
\end{proof}



\begin{theorem} \label{thm: main}
Suppose the traffic demand is feasible in $G$, i.e., $\lambda\leq f^{\max}$, and the routing policy induces the overload vector $\bm q^{\text{\emph{min}}}_k$ at every iteration $k$. Then, the link-reversal algorithm will find a DAG whose maximum flow supports $\lambda$ in a finite number of iterations.
\end{theorem}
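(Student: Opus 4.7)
The plan is to argue termination by combining the monotonicity guarantee of Lemma~\ref{lem: monotonicity} with the finiteness of the set of orientations of $G$. The stopping condition will be derived from the contrapositive of Lemma~\ref{lem: backward link}: the algorithm performs no reversals at iteration $k$ precisely when no directed link satisfies $q_{k,i}^{\min}=0$ and $q_{k,j}^{\min}>0$; but Lemma~\ref{lem: backward link} asserts such a link must exist whenever $f^{\max}_k < \lambda \leq f^{\max}$. So if Algorithm~\ref{alg: link reversal} halts (i.e.\ $D_{k+1}=D_k$), then necessarily $\lambda \leq f^{\max}_k$, which is exactly the claim we want.

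Next I would show that Algorithm~\ref{alg: link reversal} cannot run forever. The key observation is that each DAG produced by the algorithm is in fact a DAG (Lemma~\ref{lem: graph is dag}), and the lexicographically minimal overload vector strictly decreases from one iteration to the next (Lemma~\ref{lem: monotonicity}): $\bm{q}^{\min}_{k+1} <_{\lex} \bm{q}^{\min}_k$ whenever a reversal occurs. Since this lex-ordering is a strict total order on the family of feasible overload vectors across all embedded DAGs, the sequence $D_0, D_1, D_2, \ldots$ of DAGs produced by the algorithm cannot contain any repetition: revisiting a DAG $D_k$ at some later iteration $\ell > k$ would force $\bm{q}^{\min}_\ell = \bm{q}^{\min}_k$, contradicting the chain of strict decreases.

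Finally, I would invoke a cardinality argument. Every DAG produced by the algorithm is an orientation of the underlying undirected graph $G=(N,E)$, and there are at most $2^{|E|}$ such orientations. Since no orientation is repeated, the algorithm must terminate after at most $2^{|E|}$ iterations. At the terminating iteration $k^\star$, no link qualifies for reversal, so by the contrapositive of Lemma~\ref{lem: backward link} we conclude $\lambda \leq f^{\max}_{k^\star}$, meaning $D_{k^\star}$ supports the traffic demand.

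The main obstacle is not any individual step but making the termination criterion airtight: one must be careful that ``no reversal happens'' is genuinely equivalent to the hypothesis $\lambda \leq f^{\max}_k$ failing, which is exactly what Lemma~\ref{lem: backward link} (combined with the assumption $\lambda \leq f^{\max}$) provides. Once this equivalence is in hand, the rest is just the strict monotonicity plus pigeonhole over the finite set of orientations.
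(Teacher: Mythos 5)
Your proposal is correct and takes essentially the same route as the paper: both arguments combine the strict lexicographic decrease from Lemma~\ref{lem: monotonicity} with the uniqueness of the lex-minimal overload vector per DAG (Lemma~\ref{lem:201}) to conclude that no orientation of $G$ repeats, and then invoke finiteness of the set of orientations to force termination. Your version is slightly more explicit than the paper's on two points --- the termination criterion (via the contrapositive of Lemma~\ref{lem: backward link}) and the numerical bound $2^{|E|}$ --- but these are elaborations, not a different argument.

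One small wording caveat: the relation $<_{\lex}$ as defined in the paper is a total \emph{preorder} on vectors (distinct vectors can be lex-equal), not a strict total order; what actually does the work is (i) irreflexivity and transitivity of $<_{\lex}$, which make a strictly decreasing chain impossible to close into a cycle, and (ii) Lemma~\ref{lem:201}'s uniqueness, which guarantees a repeated DAG would produce a repeated (hence lex-equal) $\bm{q}^{\min}$, contradicting the strict chain. Your argument already uses exactly these two facts, so the conclusion stands; just avoid calling $<_{\lex}$ a strict total order on the vectors themselves.
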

\begin{proof}[of Theorem~\ref{thm: main}]
The link-reversal algorithm creates a sequence of DAGs $\{D_{0}, D_{1}, D_{2}, \ldots\}$ in which a strict improvement in the lexicographically minimal overload vector is made after each iteration, i.e.,
\[
\bm{q}_{0}^{\text{min}} >_{\lex} \bm{q}_{1}^{\text{min}} >_{\lex} \bm{q}_{2}^{\text{min}} >_{\lex} \cdots.
\]
The lexicographically minimal overload vector is unique in a DAG by Lemma~\ref{lem:201}, the DAGs $\{D_{0}, D_{1}, D_{2}, \ldots\}$ must all be distinct. Since there are a finite number of unique DAGs in the network, the link-reversal algorithm will find a DAG $D_{k^{*}}$ that has the lexicographically minimal overload vector $\bm{q}_{k^{*}}^{\text{min}}= \bm{0}$ and the maximum flow $f_{k^{*}}^{\text{max}} \geq \lambda$ in a finite number of iterations; this DAG $D_{k^{*}}$ exists because the undirected graph $G$ has the maximum flow $f^{\max} \geq \lambda$.
\end{proof}

\subsection{Arrivals outside stability region}
We show that even when $\lambda > f^{\max}$, the link reversal algorithm will stop reversing the links in a finite number of iterations, and it will obtain the DAG that supports the maximum throughput $f^{\max}$. We begin by examining the termination condition of our algorithm and show that if the algorithm stops at iteration $k$, then the DAG $D_k$ supports the max-flow of the network. 

\begin{lemma}
Consider the situation when $\lambda > f_k^{\max}$. If there  is no link $(i,j)$ such that $q_{k,i}^{\min}=0$ and $q_{k,j}^{\min}>0$, then $f^{\max} = f_k^{\max}$ and $\lambda > f^{\max}$. That is, if there are no links to reverse at iteration $k$, and ${\bf q}_k^{\min} > 0$, then the throughput of $D_k$ is equal to $f^{\max}$.
\end{lemma}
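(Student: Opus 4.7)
The plan is to essentially re-deploy the same cut-based argument used in the proof of Lemma~\ref{lem: backward link}, but in the contrapositive direction. The starting point is to let $A_{k}$ be the set of overloaded nodes under a flow allocation that induces $\bm{q}_{k}^{\min}$. Since $\lambda > f_{k}^{\max}$, equation~\eqref{eq:201} forces $\sum_{n} q_{k,n}^{\min} > 0$, so $A_{k}$ is nonempty. Invoking Lemma~\ref{MIN-CUT} from the Appendix (the same tool used earlier) gives that $(A_{k}, A_{k}^{c})$ is a min-cut of $D_{k}$, so $\capp_{k}(A_{k}, A_{k}^{c}) = f_{k}^{\max}$.

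Next I would translate the hypothesis ``no link $(i,j)$ with $q_{k,i}^{\min} = 0$ and $q_{k,j}^{\min} > 0$'' into a graph-theoretic statement about the cut. By definition of $A_{k}$, every node in $A_{k}$ has $q_{k,\cdot}^{\min} > 0$ and every node in $A_{k}^{c}$ has $q_{k,\cdot}^{\min} = 0$. Thus the hypothesis rules out any directed edge in $E_{k}$ from $A_{k}^{c}$ to $A_{k}$. Consequently every undirected edge in $G$ between $A_{k}$ and $A_{k}^{c}$ is oriented from $A_{k}$ to $A_{k}^{c}$ in $D_{k}$, so
\[
\capp_{k}(A_{k}, A_{k}^{c}) = \sum_{i\in A_{k},\, j\notin A_{k}} c_{ij},
\]
i.e., the directed-cut value in $D_{k}$ coincides with the value of the cut $(A_{k}, A_{k}^{c})$ in the undirected graph $G$.

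The final step is to combine this with the max-flow min-cut theorem on $G$. Any cut upper bounds $f^{\max}$, so $f^{\max} \le \sum_{i\in A_{k},\, j\notin A_{k}} c_{ij} = f_{k}^{\max}$. Since any embedded DAG also satisfies $f_{k}^{\max}\le f^{\max}$, we conclude $f^{\max} = f_{k}^{\max}$. The hypothesis $\lambda > f_{k}^{\max}$ then immediately yields $\lambda > f^{\max}$, which is the desired conclusion.

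I do not expect a significant obstacle here: the argument is essentially dual to Lemma~\ref{lem: backward link}, replacing a proof by contradiction with a direct derivation. The one subtlety worth double-checking is that the appendix lemma establishing ``$A_{k}$ is a min-cut'' still applies in the regime $\lambda > f^{\max}$ (not just $\lambda \le f^{\max}$); this should follow because the statement of that lemma depends only on $A_{k}$ being the overloaded set of the lex-min allocation and on $A_{k}$ being nonempty, both of which are guaranteed by $\lambda > f_{k}^{\max}$ alone.
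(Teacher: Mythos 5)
Your proposal is correct and follows essentially the same route as the paper: let $A_k$ be the overloaded set of the lex-min allocation, invoke the appendix min-cut lemma to get $\capp_k(A_k,A_k^c)=f_k^{\max}$, observe that the no-reversible-link hypothesis forces every cross-cut edge to point from $A_k$ to $A_k^c$ so the directed cut value equals the undirected cut value, and then sandwich $f^{\max}$ between the two. Your side-remark that the appendix lemma needs only $|A_k|>0$ (not $\lambda\le f^{\max}$) is also accurate and is the one detail the paper leaves implicit.
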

\begin{proof}
Let $A_k$ be the set of overloaded nodes under a flow allocation that induces the lexicographically minimal overload vector $\textbf{q}_k^{\min}$ in the DAG $D_k$. We know that $(A_k,A_k^c)$ is a min-cut of the network from Lemma~\ref{MIN-CUT} (in the appendix), so $$cap_k(A_k,A_k^c) =  f_k^{\max}.$$

Suppose the link reversal algorithm stops after iteration $k$, i.e. at iteration $k$ there are no links to reverse. In this situation, there  is no link $(i,j)$ such that $q_{k,i}^{\min}=0$ and $q_{k,j}^{\min}>0$, so by property (9), all the links between $A_k$ and $A_k^c$ go from $A_k$ to $A_k^c$. The capacity of the cut $(A_k,A_k^c)$ is given by 
$$cap_k(A_k,A_k^c) = \sum_{i\in A_k, j \in A_k^c} c_{ij}.$$
This is equal to the capacity of the cut $(A_k, A_k^c)$ in the undirected network $G$. So $f^{\max} \le cap_k(A_k,A_k^c) = f_k^{\max}$. Because $f^k_{\max}$ cannot be greater than $f^{max}$, $f^k_{\max}=f^{\max}$. By assumption $\lambda > f_k^{\max}$, so $\lambda > f^{\max}$.
\end{proof}

When $\lambda>f^{\max}$, this lemma shows that the link reversal algorithm stops only when the DAG achieves the maximum throughput of the network.  Hence, if the DAG doesn't support the maximum throughput, then there exists a link that can be reversed. After each reversal, Lemma \ref{lem: graph is dag} holds, so the directed graph obtained after the reversal is acyclic. We can modify Lemma \ref{lem: monotonicity} to show that every reversal produces a DAG that supports  an improved lexicographically optimal overload vector. We can combine these results to prove the following theorem.
\begin{theorem}
Suppose the traffic demand is not feasible in $G$, i.e., $\lambda > f^{\max}$, and the routing policy induces the overload vector $\bm q^{\text{\emph{min}}}_k$ at every iteration $k$. Then, the link-reversal algorithm will find a DAG whose maximum flow supports $f^{max}$ in a finite number of iterations.
\end{theorem}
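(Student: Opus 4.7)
The plan is to recycle the argument of Theorem~\ref{thm: main}, using the immediately preceding lemma as the new termination criterion. I would first observe that neither Lemma~\ref{lem: graph is dag} nor Algorithm~\ref{alg: link reversal} invokes the assumption $\lambda \le f^{\max}$: both refer only to the reversal of links between the overloaded set $A_k$ and $A_k^c$. So the sequence $\{D_0, D_1, D_2, \ldots\}$ is well defined and each $D_k$ remains acyclic, irrespective of whether the traffic demand lies in the stability region.

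Next I would verify that Lemma~\ref{lem: monotonicity} continues to hold whenever the algorithm performs a nontrivial reversal. Inspecting that proof, the only structural fact used is the existence of a link $(a,b) \in E_k$ with $q_{k,a}^{\min} = 0$ and $q_{k,b}^{\min} > 0$; property~\eqref{property1} then forces $f_{ab} = 0$, and an $\epsilon$-perturbation on the reversed link produces a feasible allocation on $D_{k+1}$ with $\bm{q}_{k+1}^{\min} <_{\lex} \bm{q}_k^{\min}$. Since the hypothesis $\lambda \le f^{\max}$ is never actually invoked in that argument, monotonicity transfers verbatim to the overload regime.

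With these two ingredients in hand, the theorem follows. As long as $f_k^{\max} < f^{\max}$, the contrapositive of the preceding lemma supplies a reversible link at iteration $k$, and Algorithm~\ref{alg: link reversal} produces $D_{k+1}$ with $\bm{q}_{k+1}^{\min} <_{\lex} \bm{q}_k^{\min}$. Because each lex-minimal overload vector is unique to its DAG by Lemma~\ref{lem:201} and there are only finitely many DAGs on the vertex set $N$, this strictly decreasing chain must halt at some iteration $k^\ast$, at which no reversible link remains; the preceding lemma then forces $f_{k^\ast}^{\max} = f^{\max}$.

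The main obstacle I anticipate is reverifying that the perturbation construction in Lemma~\ref{lem: monotonicity} is insensitive to the sign of $\lambda - f^{\max}$: one must check that the perturbed allocation $(f'_{ij})$ still satisfies flow conservation and capacity constraints on $D_{k+1}$ when the base overload vector $\bm{q}_k^{\min}$ has several positive coordinates outside $\{a,b\}$. Once that bookkeeping is in order, the finiteness step is a direct replay of the argument used in Theorem~\ref{thm: main}.
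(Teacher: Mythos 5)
Your proposal is correct and follows essentially the same route the paper sketches in the paragraph preceding the theorem: observe that Lemma~\ref{lem: graph is dag} and the lexicographic-descent argument of Lemma~\ref{lem: monotonicity} never actually use $\lambda\le f^{\max}$ (only the existence of a link $(a,b)$ with $q^{\min}_{k,a}=0$, $q^{\min}_{k,b}>0$), replace Lemma~\ref{lem: backward link} with the contrapositive of the preceding lemma as the supplier of reversible links while $f_k^{\max}<f^{\max}$, and invoke finiteness of DAGs plus uniqueness of the lex-minimal overload vector to force termination at a DAG achieving $f^{\max}$. You also correctly flag the one bookkeeping item the paper glosses over, namely re-checking that the $\epsilon$-perturbation remains feasible when several coordinates of $\bm q_k^{\min}$ are positive; that check indeed goes through since the perturbed allocation differs from $(f_{ij})$ only on the single reversed link.
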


\section{Distributed Dynamic Algorithm} \label{dynamic algorithm}

In the previous sections we developed a link reversal algorithm based on the assumption that we had a routing policy that lexicographically minimized the overload vector ${\bf q}^{\min}_k$. The algorithm reversed all the links that went from the set of all the non-overloaded nodes $A_k^c$ to the set of overloaded nodes $A_k$. We showed that repeating this process for some iterations results in a DAG that supports the arrival rate $\lambda$. 

The goal of this paper is to develop a link reversal algorithm based on backpressure. To achieve this goal, we develop a threshold based algorithm that identifies the cut $(A_k, A_k^c)$ using the queue backlog information of backpressure. We can use this cut to perform the same link reversals as in Section \ref{sec:link_reversal} without computing the lexicographically minimum overload vector. Hence, this algorithm generates the same sequence of DAGs as the link reversal algorithm described in the previous section. So, all the previous theorems hold, and the new algorithm will obtain the DAG that supports the arrival rate $\lambda$ (when possible). We will call this algorithm the loop free backpressure (LFBP) algorithm.



We begin by creating an initial DAG $D_0$ using 
 the method presented in Section \ref{sec: initial dag}. Then, we use the backpressure algorithm to route the packets from the source to the destination over $D_0$.
 Let $Q_n(t)$ be the queue length  at node $n$ in slot $t$. The backpressure algorithm can be written as in Algorithm \ref{alg: backpressure}. It simply sends packets on a link $(i,j)$ if node $i$ has more packets than $j$.

\begin{algorithm} [h]
\caption{Backpressure algorithm (BP)}
\label{alg: backpressure}
\begin{algorithmic}[1]
\ForAll{$(i,j) \in E_k$} 
	\If{$Q_i(t) > Q_j(t)$}
		\State Transmit $\min\{c_{ij}, Q_i(t)\}$ packets from  $i$ to $j$
	\EndIf
\EndFor
\end{algorithmic}
\end{algorithm}

Since backpressure is throughput optimal \cite{tassiulas}, if the arrival rate is less than  $ f_0^{\max}$, then all queues are stable.
If the arrival rate is larger than $ f_0^{\max}$, the system is unstable and the queue length  grows at some nodes. In this case, the next lemma shows that if we were using a routing policy that produced the optimal overload vector ${\bf q}^{\min}_k$, the set of all the overloaded nodes $A_k$ and the non-overloaded nodes $A_k^c$  form the smallest min-cut of the DAG $D_k$.

\begin{definition}\label{def:smallest min-cut}
We define the smallest min-cut $(X^*,X^{*c})$ in the DAG $D_{k}$ as the min-cut with the smallest number of nodes in the source side of the cut, i.e., $(X^*,X^{*c})$ solves
\begin{align*}
\text{minimize: } & |X|\\
\text{subject to: } & (X,X^c) \text{ is a min-cut of } D_k.
\end{align*}
\end{definition}

\begin{lemma}\label{SMALLEST MIN-CUT}
Let $A_{k}$  the set of overloaded nodes under a flow allocation $(f_{ij})$ that induces the lexicographically minimum overload vector in the DAG $D_{k}$. If $|A_{k}| > 0$, then $(A_{k}, A_{k}^c)$ is the unique smallest min-cut in $D_{k}$.
\end{lemma}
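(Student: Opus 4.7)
The plan is to establish the stronger containment $A_k \subseteq X$ for every min-cut $(X, X^c)$ of $D_k$. Combined with Lemma~\ref{MIN-CUT}, which already gives that $(A_k, A_k^c)$ is itself a min-cut, this immediately yields both halves of the claim: $|A_k|\leq |X|$ for every min-cut $X$ shows that $A_k$ attains the smallest source side, and if additionally $|X|=|A_k|$ then $A_k\subseteq X$ forces $X=A_k$, giving uniqueness.

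To prove the containment, I would apply flow conservation to the whole set $X$. Summing the fluid equation~\eqref{eq:105} over all $v\in X$ and cancelling the internal flows in pairs (using $s\in X$, $d\notin X$) produces
\[
\sum_{v\in X} q_v = \lambda - \Bigl(\sum_{\substack{(i,j)\in E_k \\ i\in X,\,j\in X^c}} f_{ij} - \sum_{\substack{(i,j)\in E_k \\ i\in X^c,\,j\in X}} f_{ij}\Bigr),
\]
i.e.\ $\sum_{v\in X} q_v = \lambda - [\text{net flow across } (X,X^c)]$. By property~2 of Lemma~\ref{lem:201} the allocation $(f_{ij})$ achieves the max-flow value $f_k^{\max}$, and by the standard max-flow/min-cut correspondence any such max-flow saturates every min-cut's forward edges and sends zero on every backward edge. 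Hence the net flow across $(X,X^c)$ equals $\capp_k(X,X^c)=f_k^{\max}$.

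Combining this with~\eqref{eq:107} (which reads $\sum_{v\in N} q_v = \lambda - f_k^{\max}$ since the flow is max-flow) yields $\sum_{v\in X} q_v = \sum_{v\in N} q_v$, hence $\sum_{v\in X^c} q_v = 0$. Because every $q_v\geq 0$, this forces $q_v=0$ for all $v\in X^c$, i.e.\ $A_k\cap X^c=\emptyset$, which is the desired containment $A_k\subseteq X$.

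The main subtlety I anticipate is justifying that the \emph{same} flow $(f_{ij})$ inducing $\bm q_k^{\min}$ simultaneously saturates every min-cut, not only the distinguished cut $(A_k,A_k^c)$. This is resolved by observing that ``a max-flow saturates every min-cut'' is a property of the flow itself: once $(f_{ij})$ has value $f_k^{\max}$, equality of flow and capacity on any min-cut follows from the max-flow/min-cut theorem together with nonnegativity and the capacity bound~\eqref{eq:106}, independently of which particular min-cut $(X,X^c)$ we picked.
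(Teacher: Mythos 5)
Your proof is correct, and it takes a genuinely different route from the paper's. The paper argues by contradiction: assuming a min-cut $(B,B^c)$ with $A_k\not\subset B$, it partitions the nodes into $C=A_k\cap B$, $D=A_k\setminus B$, $E=B\setminus A_k$, $F=(A_k\cup B)^c$, and chains the flow-saturation properties~\eqref{property1}--\eqref{property2} across these pieces to exhibit a cut $(A_k\cup B,F)$ of strictly smaller capacity than $(B,B^c)$, which contradicts minimality. You instead prove the containment $A_k\subseteq X$ directly for \emph{every} min-cut $(X,X^c)$ by summing the fluid equation~\eqref{eq:105} over $X$, comparing with~\eqref{eq:107}, and using nonnegativity of the $q_v$. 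This bypasses the four-way partition and the boundary-flow properties~\eqref{property1}--\eqref{property2} entirely, and is shorter and more transparent; the paper's version has the virtue of making explicit which capacity terms change, which it reuses elsewhere (e.g.\ Lemma~\ref{BIGGER CUT}).

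One step is stated loosely and should be tightened. You appeal to ``the standard max-flow/min-cut correspondence'' to say that $(f_{ij})$ saturates every min-cut forward and zeroes every backward edge. That theorem is about conservative flows, whereas $(f_{ij})$ here has positive excess $q_n$ at the overloaded nodes, so it is not directly a max-flow in the textbook sense and the phrase ``$(f_{ij})$ has value $f_k^{\max}$'' is not well defined for it. Fortunately you do not need the saturation fact as a premise. From $0\le f_{ij}\le c_{ij}$ in~\eqref{eq:106}, the net flow across $(X,X^c)$ is bounded above by $\capp_k(X,X^c)=f_k^{\max}$ for any min-cut. Substituting this one-sided bound into your identity $\sum_{v\in X} q_v=\lambda-(\text{net flow across }(X,X^c))$ and subtracting from~\eqref{eq:107} (with the throughput equal to $f_k^{\max}$ by Lemma~\ref{lem:201}, part~2) gives $\sum_{v\in X^c}q_v\le 0$, hence $\sum_{v\in X^c}q_v=0$ by nonnegativity, which is exactly the containment $A_k\subseteq X$. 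The equality of net flow with $f_k^{\max}$, and in particular the saturation property you cited, then drops out as a \emph{consequence} rather than serving as an input. With that one-line rewording your argument is airtight and the deduction of uniqueness and minimality from $A_k\subseteq X$ together with Lemma~\ref{MIN-CUT} is exactly right.
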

\begin{proof}[of Lemma~\ref{SMALLEST MIN-CUT}]
The proof is in Appendix~\ref{app:B}.
\end{proof}


Essentially, at every iteration, the link reversal algorithm of Section \ref{sec:link_reversal} discovers the smallest min-cut $(A_k,A_k^c)$ of the DAG $D_k$ and reverses the links that go from $A_k^c$ to $A_k$. Now the following theorem shows that the backpressure algorithm can be augmented with some thresholds to identify the smallest min-cut.

\begin{theorem} \label{BP_THRESHOLD_PROOF}
Assume that $(A_k,A_k^c)$ is the smallest min-cut for DAG $D_k$ with a cut capacity of $f_k^{\max} = cap (A_k,A_k^c) < \lambda$. If packets are routed using the backpressure routing algorithm, then there exist finite constants $T$ and $R$ such that the following happens:
\begin{enumerate}
\item For some $t<T$, $Q_n(t)>R$ for all $n\in A_k$, and
\item For all $t$, $Q_n(t) < R$ for $n\in A_k^c$.
\end{enumerate}
\end{theorem}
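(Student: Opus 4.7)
The plan is to exploit the fact that backpressure achieves the lexicographically minimum queue overload vector (as established by Georgiadis and cited in Lemma~\ref{lem:201}), so that under backpressure the fluid-level queue growth rate at node $n$ coincides with $q_{k,n}^{\min}$. By the definition of the overload set $A_k$, this growth rate is strictly positive for every $n \in A_k$ and is zero for every $n \in A_k^c$. The proof then splits cleanly into the two claims of the theorem, with part~2 being the real work.

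Part~1 is the easier direction. For each $n \in A_k$, the relation $\lim_{t\to\infty} Q_n(t)/t = q_{k,n}^{\min} > 0$ implies $Q_n(t) \to \infty$, so for any given $R$ there is a finite time $T_n$ beyond which $Q_n(t) > R$. Setting $T = \max_{n \in A_k} T_n$ (finite, since $|A_k|$ is finite) gives a single time by which every queue in $A_k$ has crossed the threshold. The only subtlety is that the choice of $T$ depends on the eventual choice of $R$ made in part~2.

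Part~2 requires a uniform pathwise bound on $Q_n(t)$ for $n \in A_k^c$. I would use a Lyapunov drift argument on the restricted potential $V(t) = \tfrac{1}{2}\sum_{n \in A_k^c} Q_n(t)^2$. The key observation is that, although queues in $A_k$ may grow without bound, the traffic \emph{entering} $A_k^c$ across the cut is bounded per slot by $\capp_k(A_k, A_k^c) = f_k^{\max}$ (since $(A_k,A_k^c)$ is the smallest min-cut and no link is reversed from $A_k^c$ to $A_k$ in $D_k$). The subgraph induced on $A_k^c \cup \{d\}$ therefore faces an arrival rate of at most $f_k^{\max}$ and, because $q_{k,n}^{\min}=0$ for all $n\in A_k^c$, admits a strictly feasible flow allocation that drains this rate to $d$. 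The standard backpressure drift computation yields the inequality
\begin{equation*}
\Delta V(t) \le B - \epsilon \sum_{n \in A_k^c} Q_n(t),
\end{equation*}
for constants $B,\epsilon > 0$ depending only on link capacities and the margin afforded by the cut. This drift inequality, combined with bounded per-slot arrivals and transmissions (so that $|Q_n(t{+}1)-Q_n(t)|$ is bounded), gives a deterministic upper bound $R$ on each $Q_n(t)$, $n \in A_k^c$, valid for all $t$.

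The main obstacle will be cleanly isolating the drift restricted to $A_k^c$ in the presence of the diverging queues in $A_k$. The crux is that backpressure decisions on links crossing the cut are governed by differentials $Q_i - Q_j$, so one must argue that whenever $Q_j$ (for $j \in A_k^c$) exceeds some value proportional to $R$, the cut-crossing links from $A_k$ into $j$ behave no worse than sending the worst-case rate $f_k^{\max}$ into $A_k^c$, while the interior backpressure on $A_k^c$ strictly dominates this influx. Getting a pathwise (rather than time-averaged or in-expectation) uniform bound $R$ requires care with initial conditions; enlarging $R$ to absorb $V(0)$ and the per-slot jump bound is what makes the second conclusion hold for \emph{all} $t$ rather than merely for $t$ large.
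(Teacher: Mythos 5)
Your proposal diverges from the paper's proof in both parts, and Part~2 contains a genuine gap.

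For Part~1, you invoke the fluid-level convergence $Q_n(t)/t \to q_{k,n}^{\min}$ for every $n \in A_k$ under backpressure. This outsources the work to a strong external result (essentially the main theorem of the Georgiadis--Tassiulas overload paper), whereas the paper proves the claim from scratch with a combinatorial bootstrap: some node in $A_k$ must diverge by naive flow conservation ($\lambda > \capp_k(A_k,A_k^c)$), and since backpressure only transmits over a link when the upstream backlog exceeds the downstream one, a diverging downstream node forces its ancestors -- and ultimately $s$ -- to also exceed any threshold. From there, for any proper subset $B_1 \subsetneq A_k$ containing $s$, the \emph{smallest}-min-cut property gives the strict inequality $c_{B_1 C_1} > c_{C_1 A_k^c}$ (where $C_1 = A_k \setminus B_1$), so packets accumulate somewhere in $C_1$; induction grows $B_1$ to all of $A_k$. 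Your fluid-limit route may be defensible given the literature, but it assumes precisely the kind of lex-min convergence that Section~\ref{dynamic algorithm} is designed to avoid relying on, and it is a much heavier hammer than the argument actually needed.

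For Part~2, the Lyapunov drift argument does not go through, and the obstacle is not the bookkeeping you anticipate but the existence of the margin $\epsilon$ itself. You write that the subgraph on $A_k^c \cup \{d\}$ ``admits a strictly feasible flow allocation that drains'' the incoming rate $f_k^{\max}$, and derive $\Delta V(t) \le B - \epsilon \sum_{n \in A_k^c} Q_n(t)$ with $\epsilon > 0$. But strict feasibility is false in general: the worst-case per-slot influx into $A_k^c$ equals $\capp_k(A_k,A_k^c) = f_k^{\max}$, and there can exist cuts $(A_k \cup B,\, (A_k\cup B)^c)$ with $B \subseteq A_k^c$ whose capacity is \emph{also} exactly $f_k^{\max}$ -- the smallest-min-cut property only forbids smaller min-cuts with fewer source-side nodes, not larger ones with the same capacity. (The simplest instance: a path $s\to a\to d$ with unit capacities and $\lambda>1$; here $A_k=\{s\}$, node $a$ is fed at rate $1$ and can drain at exactly rate $1$, so there is no slack.) With $\epsilon = 0$ the drift bound degenerates to $\Delta V(t) \le B$ and yields only linear growth of $V$, not a uniform pathwise bound. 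The paper's Appendix~\ref{app:bpThresholdProof} handles exactly this by proving a \emph{non-strict} invariant rather than negative drift: at the moment any queue in $A_k^c$ is large, one can split $A_k^c = B \cup C$ so that backpressure sends nothing from $C$ to $B$, the inflow to $B$ is at most $c_{A_k B}$, the outflow is $c_{BC}$, and the (non-strict) inequality $c_{A_k B} \le c_{BC}$ -- a consequence of $(A_k, A_k^c)$ being a min-cut -- shows the backlog in $B$ does not increase from that point. Choosing $R$ above the threshold at which this regime kicks in, plus an additive term absorbing initial conditions and per-slot jumps, gives the uniform bound. Your plan would need to be restructured along these non-strict ``no-growth'' lines; a standard $\epsilon$-drift argument cannot close the gap.
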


\begin{proof}
We will prove the two claims separately. To prove the first claim we will use the fact that the network is overloaded and bottlenecked at the cut $(A_k,A_k^c)$. We will prove the second claim using the fact that the number of packets that arrive into $A_k^c$ in each time-slot is upper-bounded by $f_k^{\max}$, and any cut in the network has a capacity larger than or equal to $f_k^{max}$. The detailed proofs for both claims are given in the Appendix \ref{app:bpThresholdProof}.
\end{proof}

 Each node $n$ has a threshold-based smallest min-cut detection mechanism.
When we start using a particular DAG $D_k$, in each time-slot, we check whether the queue crosses a prespecified threshold $R_k$. Any queue that crosses the threshold gets marked as overloaded. After using the DAG $D_k$ for $T_k$ timeslots, all the nodes that have their queue marked overloaded form the set $A_k$. When the time $T_k$ and threshold $R_k$ are large enough, the cut $(A_k,A_k^c)$ is the smallest min-cut as proven in Theorem \ref{BP_THRESHOLD_PROOF}. After determining the smallest min-cut, an individual node can perform a link reversal by comparing its queue's overload status with its neighbor's. All the links that go from a non-overloaded node to an overloaded node are reversed to obtain  $D_{k+1}$. The complete LFBP algorithm is given in Algorithm \ref{alg: lfbp}.

\begin{algorithm}[h] 
\caption{LFBP (Executed by node $n$)}
\label{alg: lfbp}
\begin{algorithmic}[1]
\State Input: sequences $\{T_k\}, \{R_k\}$, unique ID $n$
\State Generate initial DAG $D_0$ by directing each link $\{n,j\}$ to $(n,j)$ if $n<j$, to $(j,n)$ if $j>n$.
\State Mark the queue $Q_n$ as not overloaded
\State Initialize $t \gets 0$, $k \gets 0$
\While {true}
	\State Use BP to send/recive packets on all links of node $n$
	\If {$(Q_n(t)>R_k)$} \State {Mark $Q_n$ as overloaded.} \EndIf
	\State $t \gets t+1$
\State	
	\State $T_k \gets T_k-1$
	\If {$T_k = 0$} 
		\State Reverse all links $(j,n)$ such that $Q_j$ is not overloaded and $Q_n$ is overloaded.
		\State $k\gets k+1$
		\State Mark $Q_n$ as not overloaded
	\EndIf
	
\EndWhile
\end{algorithmic}
\end{algorithm}

\begin{corollary}
Suppose the traffic demand is feasible in $G$, i.e., $\lambda\leq f^{\max}$. Then, the LFBP algorithm (Algorithm 3) will find a DAG, whose maximum flow supports $\lambda$, in a finite number of iterations.
\end{corollary}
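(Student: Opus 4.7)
The plan is to reduce the corollary to Theorem~\ref{thm: main} by showing that the LFBP algorithm reproduces, iteration by iteration, the DAG sequence generated by the idealized link-reversal algorithm of Section~\ref{sec:link_reversal}. By Theorem~\ref{thm: main} that idealized sequence terminates in a finite number $k^*$ of iterations at a DAG $D_{k^*}$ with $f^{\max}_{k^*} \geq \lambda$, so it suffices to argue that LFBP, when the parameter sequences $\{T_k\}$ and $\{R_k\}$ are taken sufficiently large, generates the same $D_0, D_1, \ldots, D_{k^*}$.

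I would proceed by induction on $k$. The base case is immediate: the initialization in Algorithm~\ref{alg: lfbp} uses the same ID-based ordering as Section~\ref{sec: initial dag}, so LFBP and the idealized algorithm agree on $D_0$. For the inductive step, assume LFBP has produced $D_k$. If $f^{\max}_k < \lambda$, then by Lemma~\ref{SMALLEST MIN-CUT} the set $A_k$ of overloaded nodes under the lexicographically minimum overload vector is the unique smallest min-cut of $D_k$, and Theorem~\ref{BP_THRESHOLD_PROOF} supplies finite constants $T$ and $R$ such that running backpressure on $D_k$ for at least $T$ slots with threshold at least $R$ marks precisely the nodes in $A_k$ as overloaded. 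Choosing $T_k \geq T$ and $R_k \geq R$ therefore causes LFBP to reverse exactly the links from $A_k^c$ to $A_k$, which is exactly what Algorithm~\ref{alg: link reversal} does; hence LFBP constructs the same $D_{k+1}$ as the idealized process.

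If instead $f^{\max}_k \geq \lambda$, then backpressure is throughput-optimal on $D_k$ and all queue backlogs remain bounded by some constant $M_k$; choosing $R_k > M_k$ ensures no node is ever marked overloaded, so LFBP performs no further reversals and stays at $D_k$. Combining both cases with Theorem~\ref{thm: main}, the induction reaches $k = k^*$ in finitely many steps and LFBP halts at a DAG supporting $\lambda$.

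The main obstacle I anticipate is the dependence of the threshold constants $T, R$ supplied by Theorem~\ref{BP_THRESHOLD_PROOF} on the queue state inherited from the previous iteration: the transient caused by reversing links can leave queues in an arbitrary (though finite) configuration before the new DAG $D_k$ takes effect, and Theorem~\ref{BP_THRESHOLD_PROOF} must be applicable from that state. This is addressed by taking the LFBP parameters $\{T_k\}, \{R_k\}$ to be unbounded increasing sequences; since only finitely many iterations $k \leq k^*$ are needed, for sufficiently fast growth the sequences eventually dominate the relevant $T, R$ at every iteration, and the induction closes.
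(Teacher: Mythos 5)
Your proof takes essentially the same route as the paper's: reduce to Theorem~\ref{thm: main} by showing, via Theorem~\ref{BP_THRESHOLD_PROOF} and Lemma~\ref{SMALLEST MIN-CUT}, that LFBP reproduces the reversals of Algorithm~\ref{alg: link reversal} at every iteration. Your closing remark about the inherited queue state and the need to take $\{T_k\}, \{R_k\}$ large enough is a point the paper's terse proof glosses over entirely, so if anything your treatment is slightly more careful than the original.
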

\begin{proof}
Theorem 3 shows that LFBP identifies the smallest min-cut $(A,A^c)$ for the DAG $D_k$. Lemma 6 shows that $A$ is the set of overloaded nodes, and $A^c$ is the set of non-overloaded nodes in a flow allocation that induces the lexicographically minimal overload vector. LFBP reverses the links going from $A^c$ to $A$, which is also the reversals performed by the link reversal algorithm (Algorithm 1). Hence, by Theorem 1, LFBP obtains the DAG that supports $\lambda$.
\end{proof}

Good choices for the thresholds $T_k$ and $R_k$ are topology dependent. When the value of $R_k$ is too small, nodes that are not overloaded might cross the threshold producing a false positive. If the value of $R_k$ is large but $T_k$ is small, the overloaded nodes might not have enough time to develop the backlog to cross $R_k$ which produces false negatives. Hence, a good strategy is to choose a large $R_k$ so that the non-overloaded nodes don't (or rarely) cross this threshold, then chose a large $T_k$ such that the overloaded nodes have enough time to build the backlog to cross $R_k$. Optimizing these thresholds requires further research. Note that our algorithm performance degrades graciously with false positives/negatives. Even when it detects the smallest min-cut incorrectly, the actions of the algorithm preserve the acyclic structure. Thus, in the subsequent iterations the algorithm can improve the DAG again.


\subsection{Algorithm modification for topology changes}
In this section we consider networks with time-varying topologies, where several links of graph $G$ may appear or disappear over time. Although the DAG that supports $\lambda$ depends on the topology of $G$, our proposed policy LFBP can adapt to the topology changes and efficiently track the optimal solution.Additionally, the loop free structure of a DAG is preserved under link removals. Thus, if some of the links in the network disappear, we may continue using LFBP on the new network. 

To handle the appearance of new links in the network smoothly, we will slightly extend LFBP to guarantee the  loop free structure.
 For a DAG $D_k$, every node $n$ stores a unique state $x_n(k)$ representing its position in the topological ordering of the DAG $D_k$. The states are maintained such that they are unique and all the links go from a node with the lower state to a node with the higher state. When a new link $\{i,j\}$ appears we can set its direction to go from $i$ to $j$ if $x_i(k) < x_j(k)$ and from $j$ to $i$ otherwise. Since this assignment of direction to the new link is in alignment with the existing links in the DAG, the loop-free property is preserved.

The state for each node $n$ can be initialized using the unique node ID during the initial DAG creation, i.e. $x_n(0) = n$. Then whenever a reversal is performed the state of node $n$ can be updated as follows: 
\begin{align*}
x_n(k) =\left\{
\begin{array}{ll}
 x_n(k-1) - 2^k\Delta, & \text{ if $n$ is overloaded,}\\
x_n(k-1), & \text{ otherwise.}
\end{array}
\right.
\end{align*}
Here, $\Delta$ is some constant chosen such that $\Delta > \max_{i,j \in N} x_i(0) - x_j(0)$. Note that this assignment of state is consistent with the way the link directions are assigned by the link reversal algorithm. The states for the non-overloaded nodes are unchanged, so the links between these nodes are unaffected. Also, the states for all the overloaded nodes are decreased by the same amount $2^k\Delta$, so the direction of the links between the overloaded nodes is also preserved. Furthermore, the quantity $-2^k\Delta$ is less than the lowest possible state before the $k$th iteration, so the overloaded nodes have a lower state than the non-overloaded nodes. Hence, the links between the overloaded and non-overloaded nodes go from the overloaded nodes to the non-overloaded nodes. 

In this scheme, the states $x_n$ decrease unboundedly as more reversals are preformed. In order to prevent this, after a certain number of reversals, we can rescale the states by dividing them by a large positive number. This decreases the value of the state while maintaining the topological ordering of the DAG. The number of reversals $k$ can be reset to 0, and a new $\Delta$ can be chosen such that it is greater than the largest difference between the rescaled states.

\section{Complexity analysis} \label{sec:complexity}
To understand the number of iteration the link-reversal algorithm takes to obtain the optimal DAG, we analyze the time complexity of the algorithm. 
\begin{theorem}\label{thm:complexity}
Let $C$ be a vector of  the capacities of all the links in $E$, and let $I$ be the set of indices $1,2,..., |E|$.  Define $\delta > 0$ to be the smallest positive difference between the capacity of any two cuts. Specifically, $\delta$ is the solution  of the following optimization problem
\begin{align*}
\min _{A,B\subseteq I} &\sum_{a\in A} c_a-\sum_{b \in B}c_b\\
\text{subject to: } &\sum_{a\in A} c_a>\sum_{b \in B}c_b.
\end{align*}
The number of iterations taken by the link reversal algorithm before it stops is upper bounded by $\lceil|N| \frac{f^{\max}}{\delta} \rceil$ , where $f^{\max}$ is the max-flow of the undirected network.
\end{theorem}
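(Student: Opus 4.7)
My plan is to decompose the iteration count as a product of two factors: the number of distinct values taken by $f_{k}^{\max}$, and the length of a ``plateau'' during which $f_{k}^{\max}$ is constant. First I would show that $f_{k}^{\max}$ is non-decreasing in $k$. The candidate overload vector $\widehat{\bm q}$ constructed in the proof of Lemma~\ref{lem: monotonicity} has the same componentwise sum as $\bm q_{k}^{\min}$, since the $\epsilon$-shift merely moves mass from $b$ to $a$. By Lemma~\ref{lem:201}(2), $\bm q_{k+1}^{\min}$ minimizes the sum over feasible overload vectors in $D_{k+1}$, giving $\sum_{n} q_{k+1,n}^{\min} \le \sum_{n} \widehat{q}_{n} = \sum_{n} q_{k,n}^{\min}$, and equation~\eqref{eq:107} converts this into $f_{k+1}^{\max} \ge f_{k}^{\max}$. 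Since each $f_{k}^{\max}$ is a directed cut capacity and hence a sum of a subset of link capacities, the definition of $\delta$ forces every positive jump to satisfy $f_{k+1}^{\max} - f_{k}^{\max} \ge \delta$. Combined with $f_{k}^{\max} \le f^{\max}$, the sequence $\{f_{k}^{\max}\}$ takes at most $\lceil f^{\max}/\delta \rceil$ distinct values.

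Second, I would bound a plateau (consecutive iterations with $f_{k}^{\max}$ fixed at some value $v$) by $|N|$ iterations via a strict inclusion $A_{k+1}\subsetneq A_{k}$. After Algorithm~\ref{alg: link reversal}, every link between $A_{k}$ and $A_{k}^{c}$ in $D_{k+1}$ points from $A_{k}$ to $A_{k}^{c}$, so $\capp_{k+1}(A_{k},A_{k}^{c})$ equals the undirected cut capacity in $G$, which is at least $f^{\max} > v$; hence $(A_{k},A_{k}^{c})$ is not a min-cut of $D_{k+1}$. Combining this with the smallest-min-cut property of $(A_{k},A_{k}^{c})$ in $D_{k}$ (Lemma~\ref{SMALLEST MIN-CUT}) via a submodularity/uncrossing argument on the directed cut function, I would argue that no node outside $A_{k}$ can lie in $A_{k+1}$: such a node would let one uncross $A_{k+1}$ with $A_{k}$ to obtain either a strictly smaller-capacity cut in $D_{k+1}$, or a min-cut in $D_{k}$ with strictly smaller source side than $A_{k}$, contradicting the respective min-cut/smallest-min-cut assertions. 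Since $|A_{k}|\le |N|-1$, each plateau lasts at most $|N|$ iterations.

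Multiplying the two bounds yields at most $|N|\lceil f^{\max}/\delta \rceil$ total iterations, which matches the stated bound. The main obstacle is the plateau step: mere lex-smallness with a fixed sum does not itself imply $|A_{k+1}|<|A_{k}|$, since a concentrated overload like $(3,0,0)$ can ``split'' into a lex-smaller vector $(2,1,0)$ with the same sum but more overloaded nodes. The argument must therefore couple the lex-smallness guaranteed by Lemma~\ref{lem: monotonicity} with the directional structure enforced by the reversal rule (all between-set boundary links now forward) and with the uniqueness from Lemma~\ref{SMALLEST MIN-CUT}. Submodularity/uncrossing is the natural tool, but verifying that the uncrossed sets remain proper $s$-$d$ cuts in the correct directed graph, and tracking which edges contribute to each side, will require the most careful bookkeeping.
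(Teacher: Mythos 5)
Your overall decomposition matches the paper's: bound the number of distinct values of $f_k^{\max}$ by $\lceil f^{\max}/\delta\rceil$ via the $\delta$-jump observation, and bound the plateau length by $|N|$ via a strict monotonicity of $|A_k|$. The first half of your argument (non-decreasing max-flow from the sum-preserving $\epsilon$-shift in Lemma~\ref{lem: monotonicity}, then the $\delta$-jump when it does increase) is sound and is essentially how the paper argues.

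The gap is in the plateau step, and it is not a bookkeeping issue: the inclusion you assert goes the wrong way. You claim $A_{k+1}\subsetneq A_k$ ("no node outside $A_k$ can lie in $A_{k+1}$"), but the paper's Lemma~\ref{BIGGER CUT} proves the opposite: when the max-flow does not increase, $A_k\subsetneq A_{k+1}$, i.e.\ the source side of the smallest min-cut strictly \emph{grows}. Intuitively this must be so: after Algorithm~\ref{alg: link reversal}, every link between $A_k$ and $A_k^c$ points out of $A_k$, so $\capp_{k+1}(A_k,A_k^c)$ jumps to the full undirected cut capacity ($\ge f^{\max}>f_k^{\max}$); the bottleneck is pushed strictly farther from $s$, not pulled in. A concrete counterexample to your claim: take the line $s-1-2-d$ with unit capacities, $\lambda=1$, and the initial DAG $s\to 1$, $2\to 1$, $d\to 2$. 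Then $A_0=\{s,1\}$ (the lex-min splits the overload as $q_s=q_1=\tfrac12$, $q_2=0$), the reversal flips $(2,1)$, and in $D_1$ the lex-min is $q_s=q_1=q_2=\tfrac13$, giving $A_1=\{s,1,2\}\supsetneq A_0$. So node $2\notin A_0$ does lie in $A_1$, and any uncrossing argument purporting to rule this out cannot be made correct. The bound $|N|$ per plateau still follows — you only need strict monotonicity of $|A_k|$, and either direction would do — but the direction you chose is false, so the proof as sketched does not go through; you would need to redo the case analysis (as the paper does across the four regions $A_k\cap A_{k+1}$, $A_k\setminus A_{k+1}$, $A_{k+1}\setminus A_k$, $(A_k\cup A_{k+1})^c$) to establish $A_k\subsetneq A_{k+1}$ instead.
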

\begin{proof}
After each iteration of the link-reversal algorithm, either the max-flow of the DAG  increases, or the max-flow stays the same and the number of nodes in the source side of the smallest min-cut increases (see Lemma \ref{BIGGER CUT} in the Appendix). We can bound the number of consecutive iterations such that there is no improvement in the max-flow. In particular, every such iteration will add at least one node to the source set. So, it is impossible to have more than $|N|-2$ such iteration. Hence, every $|N|$ iterations we are guaranteed to have at least one increase in the max-flow.

Max-flow is equal to the min-cut capacity, and min-cut capacity is defined as the sum of link capacities. Say, the max-flow of DAG $D_{k+1}$ is greater than that of $D_k$. Let $A$ be the set of indices (in the capacity vector $C$) of the links  in the min-cut of $D_{k+1}$ , and $B$ be the set of indices of the links in the min-cut of $D_{k}$. This choice of A and B forms a feasible solution to the optimization problem given in the theorem statement. Since the optimal solution $\delta$ lower bounds all the feasible solutions in the minimization problem, the increase in the max-flow must be greater than or equal to $\delta$. 

Every $|N|$ iteration the max-flow increases at least by $\delta$. Hence, the DAG supporting the max-flow $f^{\max}$ is formed within $\lceil|N| f^{\max}/\delta \rceil$ iterations.
\end{proof}

\begin{corollary} \label{cor:integer}
In a network where all the link capacities are rational with the least common denominator $\mathcal{D} \in \mathbb{N}$, the number of iterations is upper bounded by $(|N|\mathcal{D}f^{\max})$.
\end{corollary}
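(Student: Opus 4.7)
The plan is to reduce Corollary~\ref{cor:integer} to a direct application of Theorem~\ref{thm:complexity} by establishing a clean lower bound on the quantity $\delta$ defined there. Concretely, I would argue that when all link capacities are rational with common denominator $\mathcal{D}$, the capacity of any cut (being a finite sum of such rationals) is an integer multiple of $1/\mathcal{D}$. Consequently, the difference between the capacities of any two cuts is also an integer multiple of $1/\mathcal{D}$. Whenever such a difference is strictly positive, it must therefore be at least $1/\mathcal{D}$. This immediately yields $\delta \ge 1/\mathcal{D}$.

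Next I would plug this lower bound into the complexity bound from Theorem~\ref{thm:complexity}: the number of iterations is at most
\[
\Bigl\lceil |N|\,\frac{f^{\max}}{\delta}\Bigr\rceil \;\le\; \bigl\lceil |N|\,\mathcal{D}\, f^{\max}\bigr\rceil.
\]
To conclude, I would remove the ceiling by noting that $f^{\max}$ is itself a cut capacity (the min-cut capacity, via max-flow min-cut applied to the undirected graph $G$), and hence $f^{\max} = K/\mathcal{D}$ for some non-negative integer $K$. It follows that $|N|\,\mathcal{D}\,f^{\max} = |N|\,K \in \mathbb{N}$, so the ceiling has no effect and the bound collapses to $|N|\,\mathcal{D}\,f^{\max}$ as claimed.

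Since both steps are essentially bookkeeping on top of Theorem~\ref{thm:complexity}, I do not anticipate any serious obstacle; the only thing to be careful about is the observation that $f^{\max}$ inherits the denominator $\mathcal{D}$ (so the ceiling can be dropped), which follows from max-flow min-cut applied to $G$.
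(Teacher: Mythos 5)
Your argument is correct and matches the paper's proof: both establish $\delta \ge 1/\mathcal{D}$ by noting that capacity sums are integer multiples of $1/\mathcal{D}$, then substitute into the bound of Theorem~\ref{thm:complexity}. The one place you are more careful than the paper is the final step of dropping the ceiling by observing $\mathcal{D}f^{\max}\in\mathbb{N}$ (via max-flow min-cut on $G$); the paper states the ceiling-free bound without comment, so your added justification is a small but genuine improvement in rigor rather than a different route.
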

\begin{proof}
Since the capacities are rational we can write the capacity of the $i^{\text{th}}$ link as $c_{i}=\frac{\mathcal{N}_i}{\mathcal{D}}$, where $\mathcal{N}_i$ is a natural number. From the definition of $\delta$ in Theorem \ref{thm:complexity}, we get $\delta$ to be the value of the following optimization problem:
\begin{align*}
\min _{A,B\subseteq I} &\frac{1}{\mathcal{D}} \left( \sum_{a\in A} \mathcal{N}_a-\sum_{b \in B}\mathcal{N}_b \right)\\
\text{subject to: } & \sum_{a\in A} \mathcal{N}_a > \sum_{b \in B}\mathcal{N}_b.
\end{align*}
All the $\mathcal{N}_{(.)}$ are integers, so to satisfy the constraint we must have the difference $\sum_{a\in A} \mathcal{N}_a - \sum_{b \in B}\mathcal{N}_b \ge 1$. Hence $\delta \ge \frac{1}{\mathcal{D}}$. Using this value of $\delta$ in Theorem \ref{thm:complexity}, we can see that the number of iterations is upper bounded by $(|N|\mathcal{D}f^{\max})$.
\end{proof}

\begin{corollary}
In a network with unit capacity links, the number of iterations the link-reversal algorithm takes to obtain the optimal DAG is upper bounded by $|N||E|$.
\end{corollary}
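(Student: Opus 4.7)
The plan is to derive this corollary as a direct specialization of Corollary~\ref{cor:integer}. Since every link has unit capacity, the capacities are rational numbers of the form $1/1$, so the least common denominator is $\mathcal{D}=1$. Plugging $\mathcal{D}=1$ into the bound from Corollary~\ref{cor:integer} immediately yields a bound of $|N|\, f^{\max}$ iterations.

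To finish, I need to argue that $f^{\max} \le |E|$ in the unit-capacity setting. This is immediate from the max-flow min-cut theorem: for any cut $(A, A^c)$ with $s\in A$ and $d\in A^c$, the cut capacity equals the number of undirected edges crossing the cut (because every capacity is $1$), which is trivially at most $|E|$. Taking $(A, A^c) = (\{s\}, N\setminus\{s\})$ (or any other cut) gives $f^{\max} \le |E|$. Combining this with the bound from Corollary~\ref{cor:integer}, the total number of iterations is at most $|N|\cdot f^{\max} \le |N|\cdot |E|$.

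There is no real obstacle here; the corollary is essentially an observation that specializes the earlier bound. The only thing worth being careful about is recognizing that the integer capacity case is exactly the rational case with $\mathcal{D}=1$, and that the loose bound $f^{\max}\le |E|$ (rather than, say, the degree of $s$) is what produces the clean $|N||E|$ expression stated in the corollary.
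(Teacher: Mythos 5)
Your proof is correct and follows exactly the same route as the paper: specialize Corollary~\ref{cor:integer} with $\mathcal{D}=1$ for unit capacities, then observe $f^{\max}\le|E|$. The paper's version is just terser; there is no substantive difference.
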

\begin{proof}
The max-flow $f^{\max} \le |E|$. So, by Corollary \ref{cor:integer}, the number of iterations is upper bounded by $|N||E|$.
\end{proof}

We conjecture that these upper bounds are not tight, and finding a tighter bound will be pursued in the future research. We simulated the link reversal algorithm in 50,000 different Erdos-Renyi networks ($p=0.5$) of sizes 10 to 50 with randomly assigned link capacities. The link reversal algorithm started with a random initial DAG. We found that it took less than 2 iterations on average to find the optimal DAG.

A worst case lower bound for the number of iteration is $|N|$. This lower bound can be achieved in a line network where the initial DAG has all of its links in the wrong direction.

\section{Simulation Results} \label{simulations}
We compare the delay performance of the LFBP algorithm and the BP algorithm via simulations. 
We will see that the network with the LFBP routing has a smaller backlog on average under the same load. This shows that the LFBP algorithm has a better delay performance. 
We consider two types of networks for the simulations: a simple network with fixed topology, and a network with grid topology where the links appear and disappear randomly. 

\subsection{Fixed topology}
We consider a network with the topology shown in Figure \ref{fig: network_topology}. The edge labels  represent the link capacities. The undirected network has the maximum throughput of 15 packets per time slot. Figure \ref{fig: dag0} shows the initial DAG $D_0$. Instead of running the initial DAG algorithm of Section~\ref{sec: initial dag}, here we choose a zero throughput DAG to test the worst-case performance of LFBP. The arrivals to the network are Poisson with rate $\lambda = 15\rho$ , where we vary $\rho = .5, .55, ..., .95$. For the LFBP algorithm, we set the overload detection threshold to $R_k=60$ for all $n,k$. To choose this parameter, we observed that the backlog buildup in normal operation rarely raises above 60. We also choose the detection period $T_1 = 150$ and $T_k = 50$ for all $k>1$. This provides enough time for buildup, which improve the accuracy of the overload detection mechanism.

We simulate both algorithms for one million slots, using the same arrival process sample path. Figures \ref{fig: dag1} - \ref{fig: dag3} show the various DAGs that are formed by the LFBP algorithm at iterations $k=1,2,3$. We can see that the nodes in the smallest min-cut get overloaded and the link reversals gradually improve the DAG until the throughput optimal DAG is reached.

\begin{figure}[h!]
\centering
\subfigure[Network topology.]{
\begin{overpic}[scale=.42]{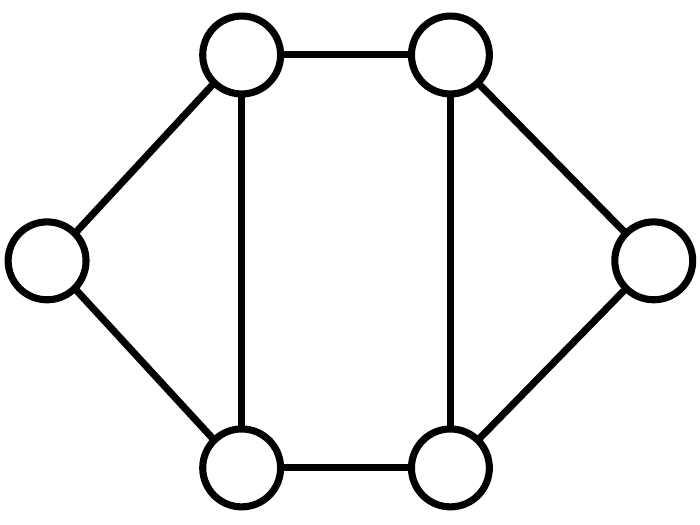}
	\put (4.5,36) {\small $s$}
	\put (32,64) {\small 2}
	\put (62,64) {\small 3}
	\put (91,35) {\small $d$}
	\put (32,5) {\small 1}
	\put (62,5) {\small 4}

	\put (20,44) {\small 15}
	\put (20,26) {\small 5}

	\put (72,44) {\small 15}
	\put (72,26) {\small 5}

	\put (37,35) {\small 5}
	\put (55,35) {\small 5}

	\put (45,59) {\small 5}
	\put (45,10) {\small 10}
\end{overpic}
\label{fig: network_topology}
}
\subfigure[The initial DAG chosen so that LFBP requires several iteration to reach the optimal.]{
\begin{overpic}[scale=.42]{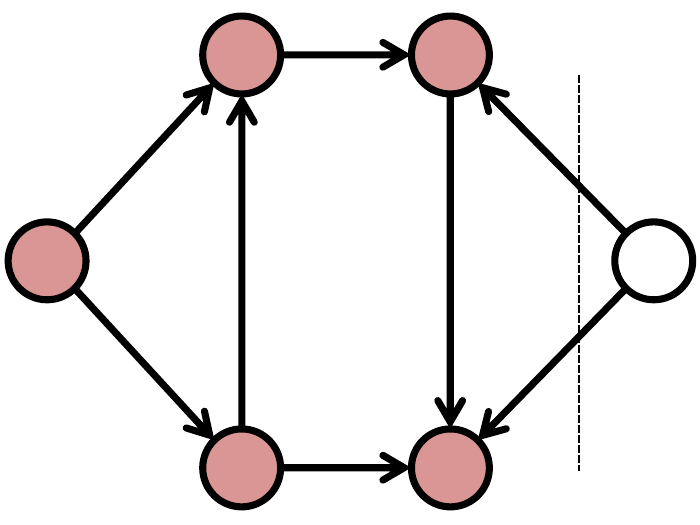}
	\put (4.5,36) {\small $s$}
	\put (32,64) {\small 2}
	\put (62,64) {\small 3}
	\put (91,35) {\small $d$}
	\put (32,5) {\small 1}
	\put (62,5) {\small 4}
\end{overpic}
\label{fig: dag0}
}
\subfigure[After 1st reversal.]{
\begin{overpic}[scale=.42]{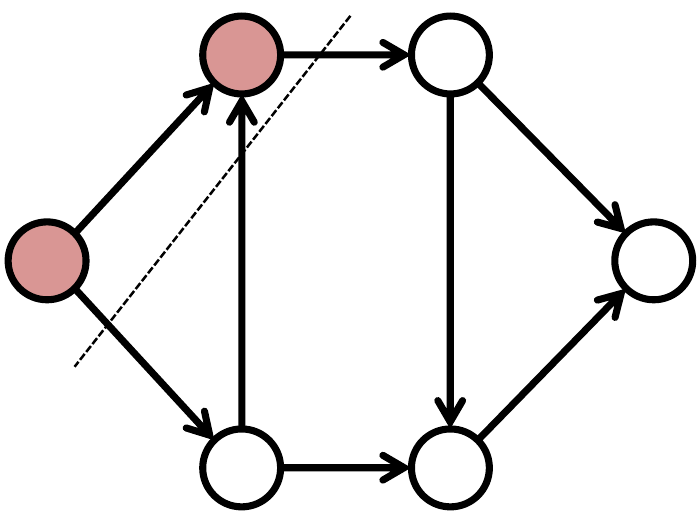}
	\put (4.5,36) {\small $s$}
	\put (32,64) {\small 2}
	\put (62,64) {\small 3}
	\put (91,35) {\small $d$}
	\put (32,5) {\small 1}
	\put (62,5) {\small 4}
\end{overpic}
\label{fig: dag1}
}
\subfigure[After 2nd reversal.]{
\begin{overpic}[scale=.42]{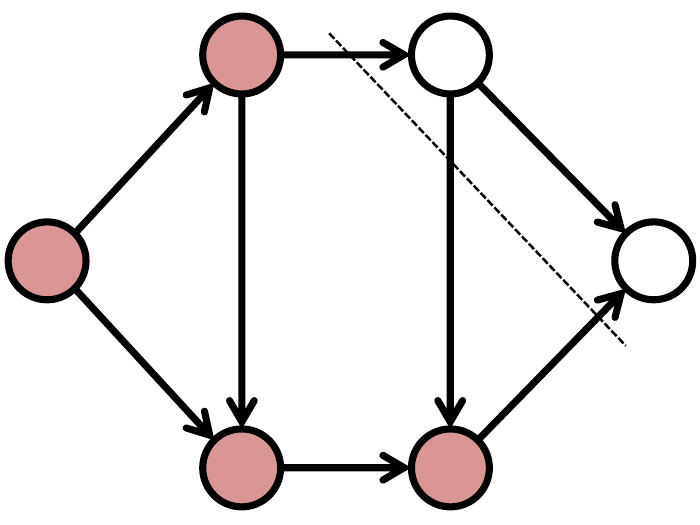}
	\put (4.5,36) {\small $s$}
	\put (32,64) {\small 2}
	\put (62,64) {\small 3}
	\put (91,35) {\small $d$}
	\put (32,5) {\small 1}
	\put (62,5) {\small 4}
\end{overpic}
\label{fig: dag2}
}
\subfigure[The optimal DAG.]{
\begin{overpic}[scale=.42]{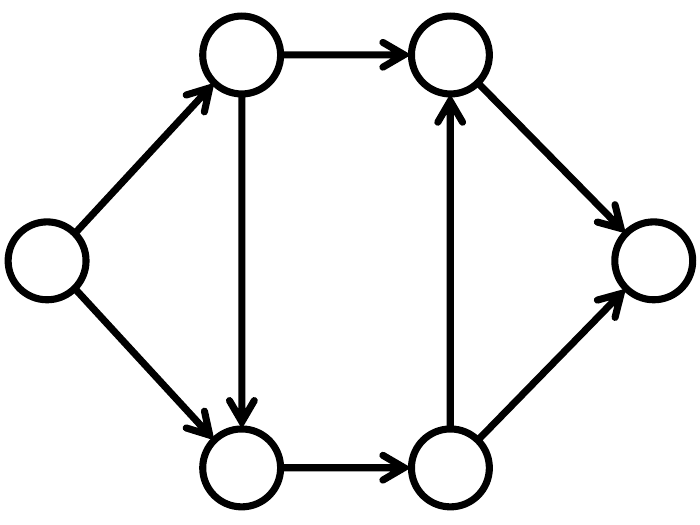}
	\put (4.5,36) {\small $s$}
	\put (32,64) {\small 2}
	\put (62,64) {\small 3}
	\put (91,35) {\small $d$}
	\put (32,5) {\small 1}
	\put (62,5) {\small 4}
\end{overpic}
\label{fig: dag3}
}
\caption{Figure (a) depicts the original network. Figures (b)-(e) are the various stages of the DAG. The red nodes represent the overloaded nodes, and the dashed line shows the boundary of the overloaded and the non-overloaded nodes.}
\end{figure}

Figure \ref{fig: fixed_topology_delay} compares the total average backlog in the network for  BP  and  LFBP, which is indicative of the average delay.
A significant delay improvement is achieved by LFBP, for example at load $0.5$ the average delay is reduced by $66\%$
 We observe that the gain in the delay performance is more pronounced when the load is low.
 In low load situations, the network doesn't have enough ``pressure'' to drive the packets to the destination and so under BP  the packets go in loops.
 
\begin{figure}[h!]
\centering
\includegraphics[scale=.8]{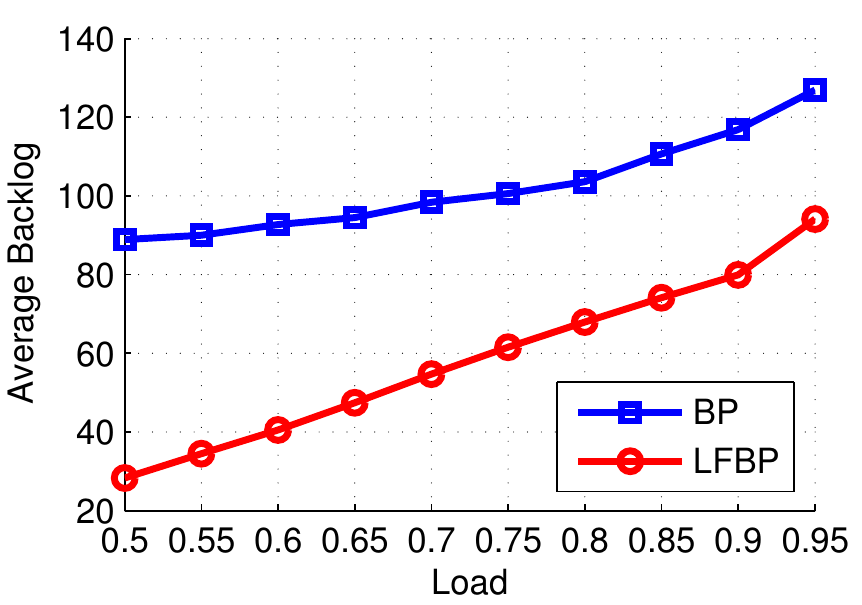}
\caption{Average backlog in the network (Fig. \ref{fig: network_topology})  with fixed topology for the Loop Free Backpressure (LFBP) and the Backpressure (BP) algorithms.} 
\label{fig: fixed_topology_delay}
\end{figure}

\subsection{Randomly changing topology}
To understand the delay performance of the LFBP algorithm on networks with randomly changing topology, we consider a network where 16 nodes are arranged in a $4\times 4$ grid. All the links are taken to be of capacity six. For the LFBP algorithm, we choose a random initial DAG with zero throughput shown in Figure \ref{fig: grid_initial_dag}. The source is on the upper left corner (node 1) and the destination is on the bottom right (node 16).

\begin{figure}[h!]
\centering
\begin{overpic}[scale=.7]{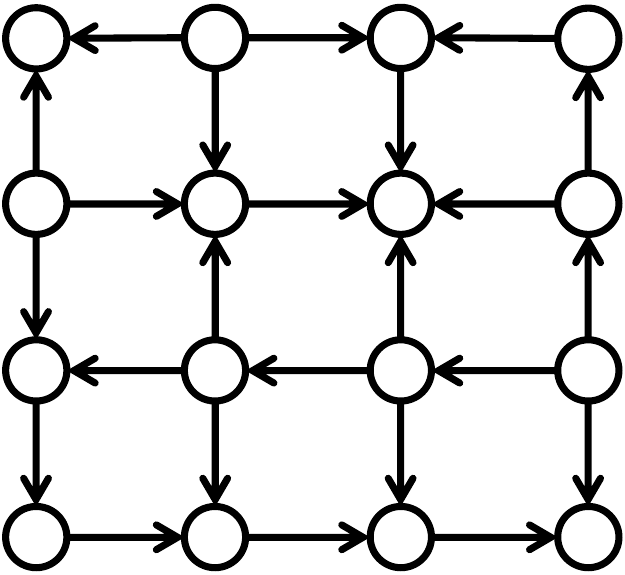}
\put(4,83){\small $1$}
\put(33,83){\small $5$}
\put(62,83){\small $9$}
\put(91,83){\small 13}

\put(4,57){\small $2$}
\put(33,57){\small $6$}
\put(61,57){\small $10$}
\put(91,57){\small 14}

\put(4,30){\small $3$}
\put(32,30){\small $7$}
\put(61,30){\small $11$}
\put(91,30){\small 15}

\put(4,3){\small $4$}
\put(33,3){\small $8$}
\put(61,3){\small $12$}
\put(91,3){\small 16}
\end{overpic}
\caption{Initial DAG for the LFBP algorithm chosen so that the LFBP needs several iterations to reach the optimal DAG. All the links have capacity six.}
\label{fig: grid_initial_dag}
\end{figure}

In the beginning of the simulations  all 24 network links are activated. At each time slot an active link fails with a probability $10^{-4}$ and an inactive link is activated with a probability $10^{-3}$. The maximum throughput of the undirected network without any link failures is 12. Clearly on average, each link is ``on'' a fraction $\frac{10}{11}$ of the time, and thus the average maximum throughput of the undirected network with these link failure rates is $\frac{10}{11}\times 12 = 10.9$. 
The arrivals to the networks are Poisson with rate $\lambda = 10.9\rho$, where $\rho = .1, .2, ..., .6$. For the LFBP algorithm, the detection threshold is set to $R_k = 100$ and the detection period is  $T_k = 30$ for all $n,k$. These parameters were chosen so that there are several reversals before a topology change occurs in the undirected network. The simulation was carried out for a million slots. 

Figure \ref{fig: changing_topology_delay} compares the average backlog of LFBP and BP. In the low load scenarios  LFBP reduces delay significantly (by 85\% for load = 0.1) even though the topology changes challenge the convergence of the link-reversal algorithm. As the load increases, both the algorithms begin to obtain a similar delay performance.

\begin{figure}[h]
\centering
\includegraphics[scale=.75]{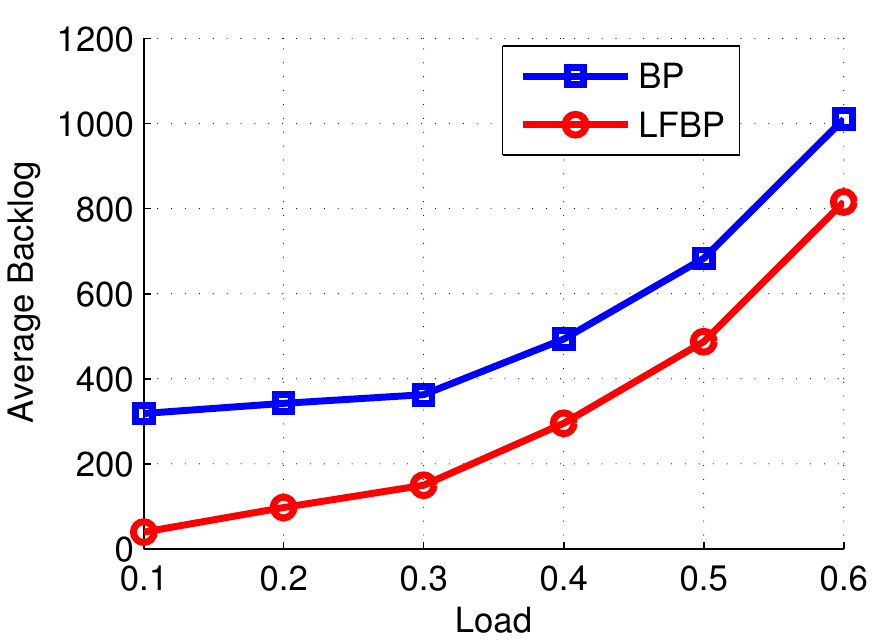}
\caption{Average backlog in the network with random link failures (Fig. \ref{fig: grid_initial_dag}) for the Loop Free Backpressure algorithm and the Backpressure algorithm.}
\label{fig: changing_topology_delay}
\end{figure}

\section{Multicommodity simulation}
We extend of the link reversal algorithm to the networks with multiple commodities. The multi-commodity algorithm is identical to the single commodity algorithm, with the exception that we now use the multicommodity backpressure of \cite{tassiulas}. Each node $n$ maintains a queue $Q^y_n(t)$ for each commodity $y$. Each commodity is assigned its own initial DAG. A pseudocode for the multicommodity LFBP that we used is given in Algorithm \ref{alg:mlfbp}. An important direction for future research is to determine whether the claims proven for a single commodity in the previous sections extend to the multicommodity case.

\begin{algorithm}[h] 
\caption{Multicommodity LFBP (Executed by $n$)}
\label{alg:mlfbp}
\begin{algorithmic}[1]
\State Input: sequences $\{T_k\}, \{R_k\}$, unique ID $n$
\State For each commodity $y$, generate initial DAG $D_0^y$ by directing $\{n,j\}$ to $(n,j)$ if $n<j$, to $(j,n)$ if $j>n$.
\State Mark all queues $Q_n^y$ as not overloaded
\State Initialize $t \gets 0$, $k \gets 0$
\While {true}
	\State Use Multicommodity BP to send/recive packets on all links of node $n$
	\ForAll {$y$}	
		\If {$(Q_n^y(t)>R_k)$} \State {Mark this $Q_n^y$ as overloaded.} \EndIf
	\EndFor
	\State $t \gets t+1$
\State	
	\State $T_k \gets T_k-1$
	\If {$T_k = 0$} 
		\ForAll {$y$}
			\State Reverse links $(j,n)$ in $D_k^y$ if $Q_j^y$ is not overloaded and $Q_n^y$ is overloaded.
		\EndFor
		\State $k\gets k+1$
		\State Mark all queues as not overloaded
	\EndIf
	
\EndWhile
\end{algorithmic}
\end{algorithm}

For the simulation, we consider a network arranged in a $4\times 4$ grid as shown in Figure \ref{fig: grid_initial_dag}. Each link has a capacity of 6 packets per time-slot. There are three commodities in the network defined by the source destination pairs (1,16), (4,13) and (5,8). For the LFBP algorithm, each commodity starts with the same initial DAG given in Figure \ref{fig: grid_initial_dag}. 

We use the arrival rate vector $\lambda^{\max} = [7.18,6.96,9.86]$, which is a max-flow vector for this network computed by solving a linear program. We scale this vector by various load factors $\rho$ ranging from 0.1 to 0.9. The arrivals for each commodity $i$ is Poisson with rate $\rho \lambda^{\max}_i$. In the beginning of the LFBP simulation, $\lfloor 500/\rho \rfloor$ dummy packets are added to the source of each commodity. This is helpful in low load cases because it forces the algorithm to find a DAG with high throughput, and avoids stopping at a DAG that only supports the given (low) load. $R_k$ was chosen to be $50$ and $T_k = 50$ for all $k>0$. The simulation was executed for 500,000 time-steps.

Figure \ref{3commodity_plot} shows the average backlog in the network for different loads under backpressure and multicommodity LFBP. We can see that the LFBP algorithm has a significantly improved delay performance compared to backpressure.

\begin{figure}[h!]
\centering
\includegraphics[scale=.7]{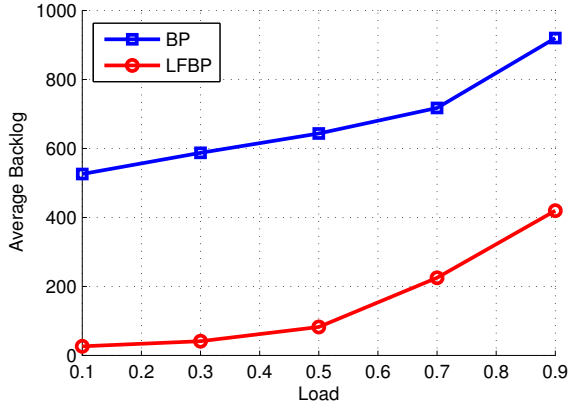}
\caption{Average backlog in a multicommodity network with fixed topology for LFBP and BP algorithms.}
\label{3commodity_plot}
\end{figure}

\section{Conclusion}
Backpressure routing and link reversal algorithms have been separately proposed for mobile wireless networks applications. In this paper we show that these two distributed schemes can be successfully combined to yield good throughput and delay performance.We develop the Loop-Free Backpressure Algorithm which jointly routes packets in a constrained DAG and reverses the links of the DAG to improve its throughput. We show that the algorithm ultimately results in  a DAG that yields the maximum throughput. Additionally, by restricting the routing to this DAG we eliminate loops, thus reducing the average delay. 
Future investigations involve optimization of the overload detection parameters  and studying the performance of the scheme on the networks with  multiple commodities.

\appendix

\section{Lemma \ref{MIN-CUT}}\label{app:A}
\begin{lemma}\label{MIN-CUT}
Consider a DAG $D_k$ with source node $s$, destination node $d$, and arrival rate $\lambda$. Let $A_{k}$ be the set of overloaded nodes under the flow allocation $(f_{ij})$ that yields the lexicographically minimum overload vector. If $|A_{k}| > 0$, then $(A_{k}, A_{k}^c)$ is a min-cut of the DAG $D_{k}$.
\end{lemma}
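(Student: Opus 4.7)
The plan is to show that the capacity of the cut $(A_k,A_k^c)$ exactly equals the max-flow $f^{\max}_k$, by combining the structural characterization of the lex-min overload vector (property 3 of Lemma~\ref{lem:201}) with a flow-conservation sum over $A_k$.

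First I would verify that $(A_k,A_k^c)$ is actually a valid cut, i.e. $s\in A_k$ and $d\in A_k^c$. The destination side is immediate: $Q_d(t)=0$ for all $t$, so $q_d=0$ and $d\notin A_k$. For the source side, I will derive the relation $\sum_{n\in A_k}q_n = \lambda\,1_{[s\in A_k]}-\capp_k(A_k,A_k^c)$ (see below) and observe that the left-hand side is strictly positive because $|A_k|>0$ and every $n\in A_k$ has $q_n>0$; this forces $s\in A_k$.

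The key computation uses property 3 of Lemma~\ref{lem:201} applied to the flow allocation $(f_{ij})$ that induces $\bm{q}_k^{\min}$. For any cross link $(i,j)\in E_k$ with $i\in A_k^c$ and $j\in A_k$, we have $q_i=0<q_j$, hence $f_{ij}=0$; for any cross link with $i\in A_k$ and $j\in A_k^c$, we have $q_i>0=q_j$, hence $f_{ij}=c_{ij}$. Summing the flow conservation equation~\eqref{eq:105} over $n\in A_k$, all internal flows inside $A_k$ cancel, incoming cross flows vanish, and outgoing cross flows contribute their full capacities, giving
\[
\sum_{n\in A_k} q_n \;=\; \lambda \;-\; \capp_k(A_k,A_k^c).
\]

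Finally, I would invoke property 2 of Lemma~\ref{lem:201}: the lex-min vector also minimizes $\sum_n q_n$, so by~\eqref{eq:107} the throughput $\sum_{i:(i,d)\in E_k}f_{id}$ it induces is the maximum possible, namely $f^{\max}_k$. Since all $q_n=0$ for $n\in A_k^c$, we also have $\sum_{n\in A_k}q_n=\sum_{n\in N}q_n$, so rearranging~\eqref{eq:107} with this throughput gives $\sum_{n\in A_k}q_n=\lambda-f^{\max}_k$. Comparing with the displayed identity above yields $\capp_k(A_k,A_k^c)=f^{\max}_k$, and by the max-flow min-cut theorem $(A_k,A_k^c)$ is a min-cut of $D_k$.

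I do not expect a serious obstacle here; the main care needed is in the bookkeeping of the flow-conservation sum (making sure internal links cancel correctly and that property 3 is applied only to cross links, which is why the DAG structure matters). The rest follows mechanically from Lemma~\ref{lem:201}.
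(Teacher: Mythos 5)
Your proof is correct and takes essentially the same approach as the paper: verify $s\in A_k$, $d\in A_k^c$, then sum the flow-conservation equations over $A_k$, apply the saturation/zero-flow properties of Lemma~\ref{lem:201} to the cross links, and combine with the throughput-maximizing property of the lex-min vector and \eqref{eq:107} to conclude $\capp_k(A_k,A_k^c)=f_k^{\max}$. The only organizational difference is that you derive a single identity $\sum_{n\in A_k}q_n=\lambda\,1_{[s\in A_k]}-\capp_k(A_k,A_k^c)$ and read both conclusions off it, whereas the paper handles $s\in A_k$ by a separate (sign-only) contradiction before computing the capacity.
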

\begin{proof}[Proof of Lemma~\ref{MIN-CUT}]
First we show that $(A_{k}, A_{k}^c)$ is a cut, i.e., the source node $s\in A_{k}$  and the destination node $d\in A_{k}^{c}$. The destination node $d$ has zero queue overload rate $q_{d}=0$ because it does not buffer packets; hence $d\in A_{k}^{c}$. We show $s \in A_{k}$ by contradiction. Assume $s \notin A_{k}$.  The property~\eqref{property1} shows that there is no flow going from $A_{k}^{c}$ to $A_{k}$, i.e., 
\[
\sum_{(i, j) \in E_{k}:\, i\in A_{k}^{c},\, j\in A_{k}} f_{ij} = 0.
\]
The flow conservation equation applied to the collection $A_{k}$ of nodes yields
\begin{align*}
\sum_{n \in A_{k}} q_n &= \sum_{(i,n)\in E_{k}:\, i\in A_{k}^c,\, n\in A_{k}} f_{in} - \sum_{(n,j)\in E_{k}:\, n \in A_{k},\, j \in A_{k}^c} f_{nj}\\
&=  - \sum_{(n, j)\in E_{k}:\, n \in A_{k}, j \in A_{k}^c} f_{nj} \leq 0,
\end{align*}
which contradicts the assumption that the network is overloaded (i.e., $|A_{k}|>0$). Note that in the above equation $\lambda$ does not appear because of the premise $s\notin A_k$.

By the max-flow min-cut theorem, it remains to show that the capacity of the cut $(A_{k}, A_{k}^{c})$ is equal to the maximum flow $f_{k}^{\text{max}}$ of the DAG $D_{k}$. Under the flow allocation $(f_{ij})$ that induces the lexicographically minimal overload vector, the throughput of the destination node $d$ is the maximum flow $f_{k}^{\text{max}}$ (see Lemma~\ref{lem:201}). It follows that
\begin{align} 
f_{k}^{\text{max}} &= \lambda - \sum_{i\in N} q_{i} = \lambda- \sum_{i\in A_{k}} q_i \label{eq:21011} \\
&= \sum_{(i,j)\in E_{k}:\, i\in A_{k},\, j\in A_{k}^c} f_{ij} \label{eq:212} \\
&= \sum_{(i,j)\in E_{k}:\, i\in A_{k}, j\in A_{k}^c} c_{ij} = \capp_k(A_{k}, A_{k}^{c}). \label{eq:213} 
\end{align}
where~\eqref{eq:21011} uses \eqref{eq:201} and $q_{i}=0$ for all nodes $i\notin A_{k}$,~\eqref{eq:212} follows the flow conservation law over the node set $A_{k}$, and~\eqref{eq:213} uses the property~\eqref{property2} in Lemma~\ref{lem:201}.
\end{proof}

\section{Proof of Lemma \ref{SMALLEST MIN-CUT}}\label{app:B}
\begin{proof}[of Lemma~\ref{SMALLEST MIN-CUT}]
Lemma \ref{MIN-CUT} shows that $(A_{k},A_{k}^c)$ is a min cut of the DAG $D_{k}$. It suffices to prove that if there exists another min-cut $(B, B^c)$, i.e., $A_{k}\neq B$ and $\capp_k(A_{k}, A_{k}^c)=\capp_k(B, B^c)$, then $A_{k} \subset B$. The proof is by contradiction. Let us assume that there exists another min-cut $(B,B^c)$ such that $A_{k} \not\subset B$. We have the source node $s\in A_{k}\cap B$ and the destination node $d\in A_{k}^{c} \cap B^{c}$. Consider the partition $\{C, D, E, F\}$ of the network nodes such that $C=A_{k} \cap B$, $D = A_{k}\backslash B$, $E = B\backslash A_{k}$ and $F = N \backslash (A_{k} \cup B)$ (see Fig.~\ref{partitions}). Since $A_{k}\not\subset B$ and $A_{k}\neq B$, we have $|D| > 0$. Also, we have $s\in C$ and $d \in F$.
\begin{figure}[ht]
\centering
\begin{overpic}[scale=.8]{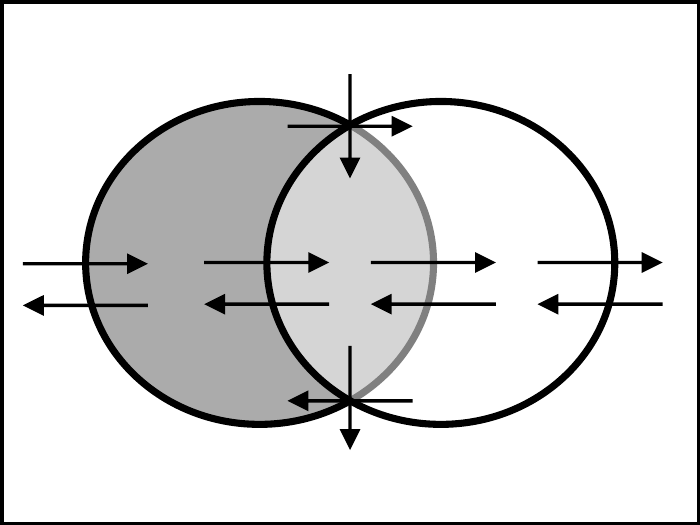}
	\put (8,55) {\small $A_k$}
	\put (80,55) {\small $B$}
	\put (20,35) {\small $D$}
	\put (45,35) {\small $C$}
	\put (70,35) {\small $E$}
	\put (85,7) {\small $F$}
\end{overpic}
\caption{A partition of the node set $N$ where $A_{k} = C\cup D$ and $B = C\cup E$.}
\label{partitions}
\end{figure}
Let $(f_{ij})$ be a flow allocation that yields the lexicographically minimum overload vector in $D_{k}$. Properties~\eqref{property1} and~\eqref{property2} show that 
\begin{align}
f_{ij} &= c_{ij}, \ \forall \, i \in A_{k},\, j \in A_{k}^c, \label{eq:204} \\
f_{ij} &= 0,  \ \forall \, i \in A_{k}^c,\, j\in A_{k}. \label{eq:205}
\end{align}
The capacity of the cut $(B, B^c)$ in the DAG $D_{k}$, defined in~\eqref{eq:202}, satisfies
\begin{equation} \label{eq:206}
\capp_k(B, B^c) = \capp_k(B, D) + \capp_k(B, F),
\end{equation}
where $B^{c}=D\cup F$. Under the flow allocation $(f_{ij})$, we have
\begin{equation} \label{eq:207}
\capp_k(B, D) = \sum_{(i, j)\in E_{k}: i\in B, j\in D} c_{ij} \geq \sum_{(i, j)\in E_{k}: i\in B, j\in D} f_{ij}.
\end{equation}
Applying the flow conservation equation to the collection of nodes in $D$ yields
\begin{equation} \label{eq:203}
\sum_{(i, j)\in E_{k}: i\in B, j\in D} f_{ij} \geq \sum_{i\in D} q_{i} + \sum_{(i, j)\in E_{k}: i\in D, j\in F} f_{ij}.
\end{equation}
In~\eqref{eq:203}, the first term is the sum of incoming flows into the set $D$; notice that there is no incoming flow from $F$ to $D$ because of the flow property~\eqref{eq:205}. The second term is the sum of queue overload rates in $D$. The last term is a partial sum of outgoing flows leaving the set $D$, not counting flows from $D$ to $B$; hence the inequality~\eqref{eq:203}. From the flow property~\eqref{eq:204}, the outgoing flows from the set $D$ to $F$ satisfy
\begin{equation} \label{eq:208}
\sum_{(i, j)\in E_{k}: i\in D, j\in F} f_{ij} = \sum_{(i, j)\in E_{k}: i\in D, j\in F} c_{ij}.
\end{equation}
Combining~\eqref{eq:206}-\eqref{eq:208} yields
\begin{align}
\capp_k(B, B^c) &= \capp_k(B, D) + \capp_k(B, F) \notag \\
&\geq \sum_{i\in D} q_{i} + \sum_{(i, j)\in E_{k}: i\in D, j\in F} c_{ij} + \capp_k(B, F) \notag \\
&> \sum_{(i, j)\in E_{k}: i\in D, j\in F} c_{ij} + \capp_k(B, F) \notag \\
&= \capp_k(A_{k}\cup B, F), \label{eq:209}
\end{align}
where the second inequality follows that all nodes in $D$ are overloaded and $q_{n}>0$ for all $n\in D$. Inequality~\eqref{eq:209} shows that there exists a cut $(A_k\cup B, F)$ that has a smaller capacity, contradicting that $(B, B^{c})$ is a min-cut in the DAG $D_{k}$. Finally, we note that the 
partition
 $(A_k,A^c_k)$ is unique because the lexicographically minimal overload vector is unique by Lemma~\ref{lem:201}.
\end{proof}

\section{Proof of Theorem \ref{BP_THRESHOLD_PROOF}}\label{app:bpThresholdProof}
\begin{proof}[of the first claim]
First we will show that the queue at the source $Q_s(t)$ crosses any arbitrary threshold $R_1$. We know that for some node $n\in A_k$, $Q_n(t) \rightarrow \infty$ as $t \rightarrow \infty$ because the external arrival rate to the source $s\in A_k$ is larger than the rate of departure from set $A_k$, i.e. $\lambda > cap(A_k,A_k^c)$. The backpressure algorithm sends packets on a link (i,j) only if $Q_i(t) > Q_j(t)$. Hence, at any time-slot if a node $b \ne s$ has a large backlog, then one of its parents $p$ must also have a large backlog. $Q_p$ can be slightly smaller than $Q_b$ because $Q_b$ might also receive packets from other nodes at the same time-slot. Specifically, $Q_p(t) > Q_b(t+1)-\sum_i c_{ib}$. Performing the induction on the parent of $p$ we can see that the source node must have a high backlog when any node in $A_k$ develops a high backlog. Note that the network is a DAG and the node $n$ received packets form the source to develop its backlog, so the induction much reach the source node. Hence, when $Q_b(T_1) \gg R_1$, $Q_s(t) > R_1$ for some $t<T_1$.

%

Now we will show that every node in $A_k$ crosses the threshold $R$. Let $B_1 \subseteq A_k$ be the set of nodes such that $Q_n(t)>R_1$ for some time $t<T_1$. We showed that $s\in B_1$. We will show that when $B_1 \ne A_k$, there exists some set $B_2$, such that  (i) $B_1 \subset B_2$, and (ii) for every node $n \in B_2$, $Q_n(t) > R_2$ for some $t < T_2$. Here, $R_2$ and $T_2$ are large thresholds.

Assume $B_1 \ne A_k$. Let $C_1=A_k\backslash B_1$, i.e all nodes in $C_1$ haven't crossed the threshold $R_1$ until time $T_1$. 
Let $c_{B_1C_1}$ be the total capacity of the links going from $B_1$ to $C_1$, and $c_{C_1A_k^c}$ be the total capacity of the links going from $C_1$ to $A_k^c$. We have $c_{B_1C_1}>c_{C_1A_k^c}$ because $(A_k,A_k^c)$ is the smallest min-cut (see Figure \ref{fig:OverloadedSubset}). When the backlogs of the nodes of $B_1$ are much larger than the nodes of $C_1$, the nodes in $C_1$ receive packets from $B_1$ at the rate of $c_{B_1C_1}$ packets per time-slot, and no packets are sent in the reversed direction. The rate of packets leaving the nodes in $C$ is upper bounded by $c_{B_1A_k^c}$ which is smaller than the incoming rate. Hence, at least one node $n' \in C$ must collect a large backlog, say larger than $R_2<R_1$. So, each node in the set $B_2 = B_1 \cup \{n'\}$ have a backlog larger than $R_2$ at some finite time $T_2$.

\begin{figure}[ht]
\centering
\begin{overpic}[scale=.8]{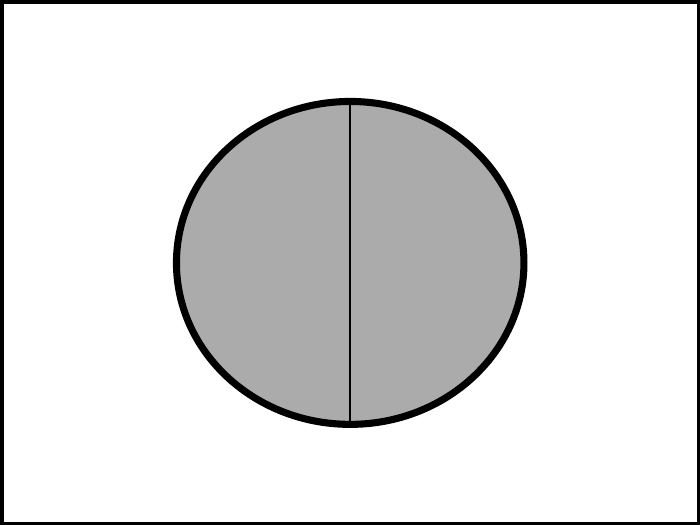}
	\put (25,55) {\small $A_k$}
	\put (35,30) {\small $B_1$}
	\put (60,30) {\small $C_1$}
	\put (90,65) {\small $N$}
\end{overpic}
\caption {Let $(A_k,A_k^c)$ be the smallest min-cut. We showed that $s \in B_1$. Say, $c_{C_1A^c} \ge c_{B_1C_1}$ then the cut $(B_1,B_1^c)$ has the capacity of $c_{B_1A^c} + c_{B_1C_1} \le cap(A_k,A_k^c)$. This contradicts the assumption that $(A_k,A_k^c)$ is the smallest min-cut. So, $c_{C_1A_k^c} < c_{B_1C_1}$. }
\label{fig:OverloadedSubset}
\end{figure}

Now using induction we can see that for $B_m$ where $m < |A_k|$, $B_m = A_k$ and all the nodes in $B_m$ cross a threshold $R = \min\{R_1, ..., R_m\}$ by time $T=\max\{T_1, ..., T_m\}$.
\end{proof}

\begin{proof}[of the second claim]
We will use the following fact to prove this claim: for any subset of nodes $S$, if the number of packets entering $S$ is lower than or equal to the number of packets leaving $S$ on every time-slot, then the total backlog in $S$ doesn't grow. So, the backlog in each node of S is bounded.

Assume a node $b$ develops a backlog $Q_b(t)> R_1$. Here $R_1$ is a chosen such that $$R_1 = |A_k^c| \sum_{i,j \in A_k^c} c_{ij} + \max_{n \in A_k^c} Q_n(0).$$
Consider a subset $B$ of $A_k^c$ such that for every node $i\in B$ and $j \in C = A_k^c \backslash B$, $(Q_i(t) - Q_j(t)) > c_{ij}$. The sets $B$ and $C$ must be nonempty because $Q_b(t)$ is large and $Q_d(t)$ is zero, that is $b \in B$ and $d \in C$. Note that backpressure doesn't send any data from $C$ to $B$. 

\begin{figure}[ht]
\centering
\begin{overpic}[scale=.8]{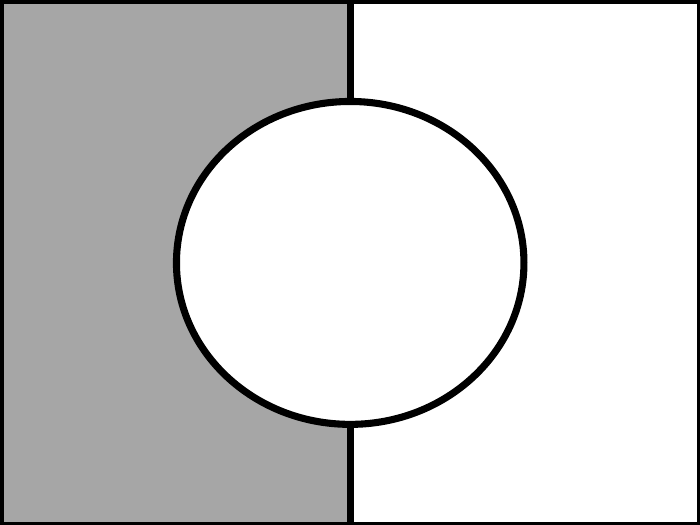}
	\put (46,35) {\small $A_k$}
	\put (5,30) {\small $B$}
	\put (90,30) {\small $C$}
\end{overpic}
\caption {Let $(A_k,A_k^c)$ be the smallest min-cut. We showed that $d \in C$. Say, $c_{AB} > c_{BC}$  then the cut $(B \cup A_k, (B \cup A_k)^c)$ has the capacity of $c_{BC} + c_{A_kC} < c_{AB} + c_{A_k C} =  cap(A_k,A_k^c)$. This contradicts the assumption that $(A_k,A_k^c)$ is the smallest min-cut. So, $c_{AB} < c_{BC}$. }
\label{fig:NonOverloadedSet}
\end{figure}

Let $c_{AB}$ be the capacity of the links going from $A$ to $B$, and let $c_{BC}$ be the capacity of the links going from $B$ to $C$. So, the number of packets entering $B$ at timeslot $t$ is upper bounded by $c_{AB}$. The number of packets leaving $B$ is equal to $c_{BC}$. Since $(A,A^c)$ is the smallest min-cut, $c_{AB} \le c_{BC}$ (see Figure \ref{fig:NonOverloadedSet}). Hence, the number of packets entering $B$ is less than or equal to the number of packets leaving it at time $t$. 

Therefore as soon as one of the nodes crosses threshold $R_1$, the sum backlog becomes bounded. We can choose a threshold $R \gg R_1$ such that this threshold is never crossed by any nodes in $A_k^c$.

\end{proof}

\section{Lemma \ref{BIGGER CUT}}
\begin{lemma}\label{BIGGER CUT}
Consider the case when $\lambda > f_k^{\max}$. The link reversal algorithm is applied on DAG $D_k$ to obtain $D_{k+1}$. Let $(A_k,A_k^c)$ and $(A_{k+1},A_{k+1}^c)$ be the smallest min-cuts of $D_k$ and $D_{k+1}$ respectively. Then, either $cap_k(A_k,A_k^c) > cap_{k+1}(A_{k+1},A_{k+1}^c)$, or $cap_k(A_k,A_k^c) = cap_{k+1}(A_{k+1},A_{k+1}^c)$ and $|A_{k+1}| > |A_k| $
\end{lemma}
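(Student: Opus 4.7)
The first disjunct as typeset, $\capp_k(A_k,A_k^c)>\capp_{k+1}(A_{k+1},A_{k+1}^c)$, contradicts Lemma~\ref{lem: monotonicity} and the sentence that opens the proof of Theorem~\ref{thm:complexity}; reading the subscripts on the two $\capp$'s as transposed, the intended statement is
\[
\capp_{k+1}(A_{k+1},A_{k+1}^c)>\capp_k(A_k,A_k^c),
\]
or equality together with $|A_{k+1}|>|A_k|$. This is what I plan to prove.

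The first step is weak monotonicity $f_{k+1}^{\max}\ge f_k^{\max}$. Algorithm~\ref{alg: link reversal} reverses only links $(a,b)$ with $q_{k,a}^{\min}=0$ and $q_{k,b}^{\min}>0$; by~(\ref{property1}) these carry zero flow in any lex-optimal allocation of $D_k$, so any max-flow of $D_k$ remains feasible in $D_{k+1}$ (the reversed edges simply carry zero flow in their new orientation). Re-evaluating the partition $(A_k,A_k^c)$ in $D_{k+1}$, each reversed edge now points from $A_k$ to $A_k^c$, so
\[
\capp_{k+1}(A_k,A_k^c)=\capp_k(A_k,A_k^c)+\sum_{\text{reversed }(a,b)}c_{ab}>f_k^{\max},
\]
strictness coming from Lemma~\ref{lem: backward link}. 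If $f_{k+1}^{\max}>f_k^{\max}$ the first disjunct is done; otherwise $\capp_{k+1}(A_k,A_k^c)>f_{k+1}^{\max}$, so $(A_k,A_k^c)$ is not a min-cut of $D_{k+1}$, and in particular $A_{k+1}\ne A_k$.

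Upgrading $A_{k+1}\ne A_k$ to $A_{k+1}\supsetneq A_k$ is the main obstacle. I plan a contradiction argument in the style of Appendix~\ref{app:B}: suppose some $v\in A_k\setminus A_{k+1}$ and partition the nodes as $C=A_k\cap A_{k+1}$, $D=A_k\setminus A_{k+1}$, $E=A_{k+1}\setminus A_k$, $F=N\setminus(A_k\cup A_{k+1})$, so that $s\in C$, $d\in F$, and $D\ne\emptyset$. Since every $A_k$--$A_k^c$ edge is oriented $A_k\to A_k^c$ in $D_{k+1}$, one has $\capp_{k+1}(E,D)=0$; decomposing the cuts $(A_{k+1},A_{k+1}^c)$ and $(A_k\cup A_{k+1},F)$ by block pair, their difference collapses to $\capp_{k+1}(C,D)-\capp_{k+1}(D,F)$. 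Every $n\in D$ is overloaded in $D_k$, so summing flow conservation over $D$ in the lex-optimal allocation of $D_k$, together with~(\ref{property1})--(\ref{property2}) (no inflow from $A_k^c$ into $D$, all $D\to A_k^c$ edges saturated), produces strictly positive net inflow into $D$ from $C$; this forces $\capp_{k+1}(C,D)>\capp_{k+1}(D,F)$, using that intra-$A_k$ edges are unchanged by the reversal, and exhibits $(A_k\cup A_{k+1},F)$ as a cut of $D_{k+1}$ with capacity strictly below $f_{k+1}^{\max}$, contradicting the min-cut property of $A_{k+1}$. Hence $A_k\subseteq A_{k+1}$, giving $|A_{k+1}|>|A_k|$ as required. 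The delicate part is this final capacity comparison, which demands careful bookkeeping of which edges the algorithm has and has not reversed (only $(A_k,A_k^c)$-crossings), mirroring the chase in the proof of Lemma~\ref{SMALLEST MIN-CUT}.
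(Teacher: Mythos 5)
Your reading of the statement (the displayed inequality has its sides transposed; the content is that the min-cut capacity increases, or stays equal while the source side grows) is right, and your first two steps are sound: weak monotonicity $f^{\max}_{k+1}\ge f^{\max}_{k}$ by reusing the lex-optimal allocation of $D_k$ (which puts zero flow on every reversed link by \eqref{property1}), and the observation that $\capp_{k+1}(A_k,A_k^c)$ strictly exceeds $f^{\max}_k$ once at least one link is reversed, so that equality of the max-flows forces $A_{k+1}\ne A_k$. The gap is in the final step, exactly where you flag delicacy. You reduce the contradiction to $\capp_{k+1}(C,D)>\capp_{k+1}(D,F)$ and assert that flow conservation over $D$ forces it. It does not. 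Summing \eqref{eq:105} over $D$ in the lex-optimal allocation of $D_k$, with \eqref{property1} killing inflow from $A_k^c$ and \eqref{property2} saturating the $D\to A_k^c$ links, gives
\[
\capp_k(C,D)>\capp_k(D,F)+\capp_k(D,E),
\]
which controls only the links already oriented $D\to F$ in $D_k$. But $\capp_{k+1}(D,F)=\capp_k(D,F)+\capp_k(F,D)$: every link oriented $F\to D$ in $D_k$ runs from a zero-overload node to an overloaded one, so the algorithm reverses it and its capacity is added to the $D\to F$ side of your candidate cut in $D_{k+1}$. The term $\capp_k(F,D)$ appears nowhere in your flow inequality, so the comparison of $(A_k\cup A_{k+1},F)$ with $(A_{k+1},A_{k+1}^c)$ inside $D_{k+1}$ does not close.

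The missing ingredient is a cut comparison carried out in $D_k$ rather than in $D_{k+1}$: since $(A_k,A_k^c)$ is a min-cut of $D_k$ and $(A_k\cup A_{k+1},F)$ is also a cut of $D_k$, one gets $\capp_k(E,F)\ge\capp_k(C,E)+\capp_k(D,E)$ (this is the inequality $l_5\le l_7$ in the paper's proof). Feeding this into the block decomposition of $f^{\max}_{k+1}=\capp_{k+1}(A_{k+1},A_{k+1}^c)=\capp_k(C,D)+\capp_k(C,F)+\capp_k(F,C)+\capp_k(E,F)$ together with the assumed equality $f^{\max}_{k+1}=f^{\max}_k=\capp_k(C,E)+\capp_k(C,F)+\capp_k(D,E)+\capp_k(D,F)$ yields $\capp_k(C,D)+\capp_k(F,C)\le\capp_k(D,F)$, which does contradict your flow inequality; alternatively (the paper's route, which never needs the flow-conservation step at all) the same ingredients show that $(A_k\cap A_{k+1},(A_k\cap A_{k+1})^c)$ would be a min-cut of $D_k$ with strictly fewer source-side nodes than $A_k$, contradicting Lemma~\ref{SMALLEST MIN-CUT}. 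Without some such inequality relating the $E$--$F$ and $C$--$E$, $D$--$E$ capacities in $D_k$, your argument is incomplete.
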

\begin{proof}
\begin{figure}[ht]
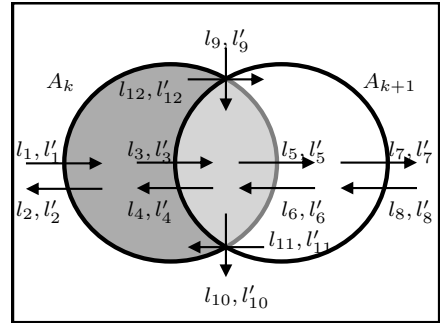

\centering
\begin{overpic}[scale=.8]{figures/partitions.pdf}
	\put (8,55) {\small $A_k$}
	\put (82,55) {\small $A_{k+1}$}
	\put (1,39){\small $l_1,l_1'$}
	\put (1,25){\small $l_2,l_2'$}
	\put (27,39){\small $l_3,l_3'$}
	\put (27,25){\small $l_4,l_4'$}
	\put (63,39){\small $l_5,l_5'$}
	\put (63,25){\small $l_6,l_6'$}
	\put (88,39){\small $l_7,l_7'$}
	\put (88,25){\small $l_8,l_8'$}
	\put (45,65){\small $l_9,l_9'$}
	\put (45,5){\small $l_{10},l_{10}'$}
	\put (60,18){\small $l_{11},l_{11}'$}
	\put (25,53){\small $l_{12},l_{12}'$}
\end{overpic}
\caption{Here $l_i$ represents the sum of the capacities of the links going from one partition to the next in the DAG $D_k$, and $l_i'$ represents the sum of the link capacities in the DAG $D_{k+1}$. For example, $l9$ and $l9'$ represent the links that go from $(A_k \cup A_{k+1})^c$ to $(A_k \cap A_{k+1})$ in DAGs $D_k$ and $D_{k+1}$ respectively.}
\label{partitions2}
\end{figure}
Consider the partitioning of the nodes as shown in Figure \ref{partitions2}. For $i=1,...,12$, $l_i$ represents the sum of the capacities of the links going from one partition to the next in the DAG $D_k$, and $l_i'$ represents the sum of the link capacities in the DAG $D_{k+1}$. The capacities of the smallest min-cut, before and after the reversal are given by $$cap_k(A_k,A_k^c) = l_2+l_5+l_{10}+l_{12}\text{ and }$$ $$cap_{k+1}(A_{k+1},A_{k+1}^c) =  l_4' + l_7' + l_{10}' + l_{11}'$$ respectively. Note that only the links that are coming into $A_k$ are different in $D_k$ and $D_{k+1}$. So 
\begin{equation}
l_i=l_i' \text{ for } i=3,4,7,8,10,12. \label{eq1}
\end{equation}
Because of the reversal there are no links coming into $A_k$ in the DAG $D_{k+1}$:
\begin{equation}
l_1',l_6',l_9', l_{11}' = 0 \label{eq2}.
\end{equation}
After the reversal, the incoming links to $A_k$ become outgoing from $A_k$,
\begin{equation}
l_{10}' = l_{10} + l_9 \label{eq3}.
\end{equation}
(Corresponding equations for $l_2', l_5'$ and $l_{12}'$ are omitted because they are not necessary for the proof). Since $(A_k, A_k^c)$ is a min-cut, 
\begin{equation}
l_5 \le l_7. \label{eq4}
\end{equation}
This is true because otherwise the cut $(A_k \cup A_{k+1}, (A_k \cup A_{k+1})^c)$  in the DAG $D_k$ has a smaller capacity then the min cut $(A_k,A_k)^c$. Specifically, let us assume $l_5>l_7$. Then, we get the contradiction: 
\begin{align*}
cap_k(A_k \cup A_{k+1}, (A_k \cup A_{k+1})^c) &= l_2 + l_7 + l_{10}\\
	& < l_2 +l_5 + l_{10} + l_{12} \\
	& = cap_k(A_k, A_k)^c
\end{align*}

First we will show that if $A_k\backslash A_{k+1}\ne \phi$, then the capacity of the DAG must have increased. The proof is by contradiction. 

Let us assume that the throughput didn't increase. So, 
\begin{align}
cap_k(A_k,A_k^c) &\ge cap_{k+1}(A_{k+1},A_{k+1}^c)  \nonumber\\
&=  l_4' + l_7' + l_{10}' + l_{11}'  \nonumber \\
&= l_4 + l_7 + l_{10} + 0  \label{eq5}\\
&\ge l_4 + l_5 + l_{10} \label{eq6}\\
&= cap_k(A_k \cap A_{k+1}, (A_k \cap A_{k+1})^c).
\end{align}
(\ref{eq5}) is follows from (\ref{eq1}) and (\ref{eq2}), and (\ref{eq6}) follows from (\ref{eq4}). Since $A_k\backslash A_{k+1}\ne \phi$ by assumption, $|A_k| > |A_k \cap A_{k+1}|$. This leads to a contradiction, because in DAG $D_k$ the cut $(A_k \cap A_{k+1}, (A_k \cap A_{k+1})^c)$ is smaller than the smallest min-cut $(A_k,A_k^c)$. Hence, $cap_k(A_k,A_k^c) < cap_{k+1}(A_{k+1},A_{k+1}^c)$.

Next, we will consider the case $A_k\backslash A_{k+1} = \phi$. Using (\ref{eq4}),
$$cap_k(A_k,A_k^c) =  l_5+l_{10}\le l_7+l_{10}.$$ In this situation, we again have two cases.
First, if $A_k = A_{k+1}$ we know that $l_{10}' > l_{10}$ and $l_7 = 0$. Hence, $cap_k(A_k,A_k^c) < l_{10}' = cap_{k+1}(A_{k+1}, A_{k+1}^c)$. 

Second, if $A_k \subset A_{k+1}$, then $|A_k| > |A_{k+1}|$ and  
\begin{equation}
l_{10}' \ge l_{10}. \label{eq8}
\end{equation}
Using (\ref{eq1}) and (\ref{eq8}) $cap_k(A_k,A_k^c) \le cap_{k+1}(A_{k+1}, A_{k+1}^c)$.
\end{proof}

\end{document}